 \documentclass[preprint,review,12pt]{elsarticle}



\usepackage[utf8x]{inputenc}
\usepackage[left=3cm, right=3cm, top=3cm, bottom=3cm]{geometry}
\usepackage{setspace}
\usepackage{comment}
\usepackage{wrapfig}
\usepackage{subfigure}
\usepackage[english]{babel}
\usepackage{csquotes}
\usepackage[ruled,vlined]{algorithm2e}
\usepackage{amssymb,amsmath,amsthm} 
\usepackage{enumerate}
\usepackage{hyperref} 
\usepackage{graphicx}
\usepackage{amsfonts}
\usepackage{float}
\usepackage{fnpos} 
\usepackage[usenames,dvipsnames]{xcolor} 
\usepackage{tabularx} 
\usepackage{adjustbox}
\usepackage{longtable}
\usepackage{lineno}
\usepackage{threeparttable} 
\usepackage{tikz}
\usetikzlibrary{trees}

\newtheorem{theorem}{Theorem}[section]
\newtheorem{lemma}[theorem]{Lemma}
\newtheorem{corollary}[theorem]{Corollary}

\tikzstyle{level 1}=[level distance=3.5cm, sibling distance=3.5cm]
\tikzstyle{level 2}=[level distance=3.5cm, sibling distance=2cm]

\tikzstyle{bag} = [text width=4em, text centered]
\tikzstyle{end} = [circle, minimum width=3pt,fill, inner sep=0pt]

\setstretch{1.5}


\newcommand{\R}{\mathbb{R}}
\newcommand{\OO}{\mathcal{O}}
\newcommand{\SO}{\mathcal{O}^*}
\newcommand{\N}{\mathbb{N}}




\journal{Arxiv}

\begin{document}

\begin{frontmatter}

\title{Novel Matrix Hit and Run for Sampling Polytopes\\ and Its GPU Implementation}



\author[label1]{Mario Vazquez Corte}
\address[label1]{Department of Computer Science, Sonder.art, email \href{mailto:uumami@sonder.art}{uumami@sonder.art}}
\author[label2]{Luis V. Montiel}
\address[label2]{Department of Operations Research and Industrial Engineering.\\
Instituto Tecnol\'ogico Aut\'onomo de M\'exico - ITAM \\ R\'io Hondo 1, CDMX, M\'exico.}


\setstretch{1.1} 
\begin{abstract}

We propose and analyze a new Markov Chain Monte Carlo algorithm that generates a uniform sample over full and non-full dimensional polytopes. This algorithm, termed "Matrix Hit and Run" (MHAR), is a modification of the Hit and Run framework. For the regime $n^{1+\frac{1}{3}} \ll m$, MHAR has a lower asymptotic cost per sample in terms of soft-O notation ($\SO$) than do existing sampling algorithms after a \textit{warm start}. MHAR is designed to take advantage of matrix multiplication routines that require less computational and memory resources. Our tests show this implementation to be substantially faster than the \textit{hitandrun} R package, especially for higher dimensions. Finally, we provide a python library based on Pytorch and a Colab notebook with the implementation ready for deployment in architectures with GPU or just CPU. 
\end{abstract}

\begin{keyword}
Sampling \sep Polytopes \sep Graphics Processing Unit \sep Hit and Run \sep Random Walk \sep MCMC.  
\end{keyword}

\end{frontmatter}

\setstretch{1.5} 


\section{Introduction} \label{S:1}

Random sampling of convex bodies is employed in disciplines such as operations research, statistics, probability, and physics. Among random-sampling approaches, Markov Chain Monte Carlo (MCMC) is the fastest, most accurate, and easiest to use \cite{walks}. MCMC is often implemented using polytope sampling algorithms, which are used in volume estimation \cite{polytope_volume} \cite{RHMC} \cite{practical_polytope}, convex optimization \cite{convex_program} \cite{optimization}, contingency tables \cite{tables}, mixed integer programming \cite{mix_int}, linear programming \cite{lin_pog}, hard-disk modeling \cite{hard_disk}, and decision analysis \cite{har_tommi} \cite{montiel_jd} \cite{montiel_b}.

Sampling methods start by defining a Markov chain whose stationary distribution converges to a desired target distribution. Then they draw a predetermined number of samples. These methods have two sources of computational complexity: \textit{mixing-time}, which is the number of samples needed to lose the ``dependency'' between each draw; and \textit{cost per iteration}, which is the number of operations required to obtain a single sample. Sampling algorithms aim for efficient mixing-times, so that they can produce independent samples without dropping (also called "burning") too many of them, and a low cost per iteration in order to draw samples fast \cite{mcmc_goals}.

\subsection{History and relevance of MCMC}

The use of Monte Carlo methods has surged in the last 50 years, due to the availability of modern computers. However, there are records of experiments leading to a Monte Carlo simulation method as early as 1901 when Mario Lazzarini approximate $\pi$ by manually repeating Buffon's needle experiment 3,408 times. During the first half of the 20th century the use of Monte Carlo had a frequentist approach, since the Bayesian approach was viewed as unfavorable due to philosophical and computational considerations. With the advent of MCMC together with more powerful computers, Bayesian Monte Carlo methods saw an increase in use, having its first application published in 1993 with the ``the bootstrap filter'' \cite{first_bayesian}. 
 
Recently, numerous applications in operations research have used MCMC to complement diverse optimization models. For example, the characterization of a joint probability distribution under partial information is perhaps not unique \cite{montiel_b}. Hence, if we need the joint probabilities to value a real option \cite{montiel_a}, or to optimize the net gain of an oil field \cite{montiel_jd}, we have to understand the space to which the joint distribution belongs. Another example is the incomplete specification of a multi-attribute utility function in decision analysis. Here, the problem is to understand the range of preferences of the decision maker to provide recommendations \cite{har_tommi}, \cite{Montiel:2013vn}. In cooperative game theory \cite{montiel_c}, MCMC can be used to create an approximate objective function to optimize the negotiation strategy for a coalition of players.

 
\subsection{The blueprint}

This work presents an algorithm we call Matrix Hit and Run (MHAR) for sampling full and non-full dimensional polytopes. MHAR enhances the Hit-and-Run (HAR) algorithm proposed in \cite{montiel_jd}. We use the standard definition of a generic polytope $\Delta:=\{x \in \R^n | Ax \leq b\}$, where $(A,b) \in \R^{m \times n} \times \R^{m \times 1}$, $n$ is the number of elements of $x$, $m = m_E + m_I$ is the number of restrictions, $m_E$ is the number of equality constraints, and $m_I$ is the number of inequality constraints. 

The contribution of this work is six-fold:
\begin{itemize}
\item First, we introduce Matrix Hit-and-Run (\textbf{MHAR}).
\item Second, we show that the cost per sample of the MHAR depends entirely on $m, n, z$, and $\omega$, where $m, n$ are as described in the definition of $\Delta$, $\omega$ represents a matrix multiplication coefficient as described in Table \ref{table:1}, and $z$ is a padding hyper-parameter specified by the user. After proper pre-processing and a warm start, the algorithm has a cost per sample of $\SO\big(\min(m_I^{\omega -2}n^4, m_In^{\omega +1})\big)$ for the full dimensional scenario, and of $\SO\big(\min(n^{\omega +2}, m_In^{\omega +1})\big)$ for the non-full dimensional one.
\item Third, we demonstrate that MHAR has lower cost per sample than HAR if the hyper-parameter $z$ is bigger than $\max(n,m)$. This is achieved by switching possibly isolated \textit{walks} into a padded matrix that allows us to share operations between walks. 
\item Fourth, we show that after proper pre-processing and a warm start, MHAR has a lower asymptotic cost per sample for the regime $n^{1+\frac{1}{3}} \ll m$ than does any of the published sampling algorithms \cite{walks}.   
\item Fifth, we provide code for MHAR as a python library based on the Pytorch framework. It is ready for use in CPU or CPU-GPU architectures (as found in Colab, AWS, Azure, and Google Cloud). All MHAR experiments were conducted using Colab notebooks with an Nvidia P100 GPU. The code is available in \url{https://github.com/uumami/mhar_pytorch}. The python package can be installed with the \textbf{pip install mhar}, the official site of the package is \url{https://github.com/uumami/mhar}
\item Sixth, we present the results of experiments to assess the performance of MHAR against the \textit{hitandrun} package used in \cite{har_tommi}. MHAR was found to be substantially faster in almost all scenarios, especially in high dimensions. Furthermore, we ran simulations to empirically test the convergence in distribution of our implementation, with favorable results. Finally we present insights over the padding hyper-parameter $z$ obtained via computational tests.
\end{itemize}

The remainder of this paper is organized as follows. \S \ref{S:2} revises definitions and some basic matrix-to-matrix operations. \S \ref{S:3} revisits the cost per iteration and cost per sample of HAR. \S \ref{S:4} provides a computational complexity analysis of MHAR. \S \ref{S:5} compares MHAR against other algorithms developed for full dimensional scenarios. \S \ref{S:6} contains a back-to-back comparison of our implementation against the \textit{``hitandrun''} library used in \cite{har_tommi}, and a numerical analysis of the padding parameter $z$. \S \ref{S:7} presents our conclusions and identifies future work.

For clarity and simplicity, HAR will refer to the algorithm presented in \cite{montiel_jd}, which extends \cite{firs_har_ever} for non-full dimensional polytopes. For ease of comparison, we use "soft-O" notation $\SO$, which suppresses $log(n)$ factors and other parameters like error bounds \cite{walks}, \cite{RHMC}, \cite{original-har}. In order to allow comparison with other algorithms, we assume that the polytope sampled by HAR and MHAR has received proper pre-processing, which means the polytope is in near isotropic position as defined in \cite{walks}, \cite{RHMC}, \cite{har_tommi}. Additionally all algorithms are compared from a \textit{warm start}. We use $f \ll g$ notation to define a relation where $f\in \OO(g)$. Finally, we assume the existence of a random stream of bits that allow us to generate a random number in $\OO(1)$.


\section{Preliminaries}\label{S:2}
This section formalizes the notation and provides a brief overview of computational complexity in matrix-to-matrix operations.

\subsection{Polytopes}

We start by defining a polytope, which is the n-dimensional generalization of a polyhedron, as the intersection of half-spaces. Formally, a polytope is characterized by a set of $m_E$ linear equality constraints and $m_I$ linear inequality constraints in a Euclidean space ($\mathbb{R}^n$):
\begin{align}
 \Delta^I \ &= \{ x\in \mathbb{R}^n \;|\;  A^Ix \leq b^I, \;  A^I \in \mathbb{R}^{m_I \times n}, \;  b^I \in \mathbb{R}^{m_I} \}, \label{TS1}\\
 \Delta^E &= \{ x\in \mathbb{R}^n \;|\;  A^Ex = b^E, \;  A^E \in \mathbb{R}^{m_E \times n}, \;  b^E \in \mathbb{R}^{m_E} \},\label{TS2}\\
 \Delta \ \ &= \Delta^I \cap \Delta^D, \label{TS3}
\end{align}
where Equations (\ref{TS1}) and (\ref{TS2}) are defined by the inequalities and equalities, respectively. The third equation defines the polytope of interest, and  it is the intersection of the two previous sets. Since $\Delta$ is the intersection of convex sets, then by construction it is also convex. For simplicity we assume all polytopes to be bounded, non-empty, and characterized with no redundant constraints.

\subsection{Matrix multiplication}

We adopt common notation used in matrix multiplication. $\omega$ represents the matrix multiplication coefficient - which characterizes the number of operations required to multiply two $n \times n$ matrices. The complexity for such multiplication is of the order $\OO(n^\omega)$. The lowest complexity for matrix multiplication algorithms is conjectured to be $\Omega(n^2)$ \cite{lowest_bound}. Table \ref{table:1} shows the theoretical bounds for many well-known multiplication algorithms.
\begin{table}[h!]
\caption{Asymptotic complexity of matrix multiplication algorithms}
\label{table:1}
\vspace{.3cm}
\centering
\begin{tabular}{ |p{4.5cm}||p{2.5cm}|  }
 \hline
 \multicolumn{2}{|c|}{Matrix Multiplication Algorithms} \\
 \hline
 \textbf{Algorithm} & \textbf{Complexity}\\
 \hline
 Naive   & $\OO(n^{3})$   \\
 \hline
 Strassen-Schonhaeg &  $\OO(n^{2.807})$ \\
 \hline
 Coppersmith-Winograd & $\OO(n^{2.376})$ \\
 \hline
 Legall     & $\OO(n^{2.373})$ \\
 \hline
\end{tabular} 
\end{table}

In general, \cite{rec_matrix} showed that the number of operations needed to multiply two matrices with dimensions $m\times n$ and $n \times p$ is of $\mathcal{O}(d_1d_2d_3^{\omega-2})$, where $d_3=min\{m,n,p\}$ and $\{d_1,d_2\} = \{m,n,p\}-\{d_3\}$. The special case of matrix-vector multiplication $d_3=1$ yields a bound of $\mathcal{O}(mn)$. The smallest published $\omega$ is 2.373 \citep{legall_min}.

It is possible to define a function $\mu$ that represents the matrix multiplication order of complexity for matrices $A \in \R^{n_1 \times n_2}$ and $B \in \R^{n_2 \times n_3}$ as
\begin{align}
\mu_{A,B}=&
  \begin{cases}
    n_1^{\omega-2} n_2 n_3       & \quad \text{if } \min\{n_1,n_2,n_3\}=n_1,\\
    n_1 n_2^{\omega-2} n_3       & \quad \text{if } \min\{n_1,n_2,n_3\}=n_2,\\
    n_1 n_2 n_3^{\omega-2}       & \quad \text{if } \min\{n_1,n_2,n_3\}=n_3.
  \end{cases}
 \end{align}
Thus we can express the complexity of the operation $AB$ as $\OO(\mu_{A,B})$.

In practice, only the Naive and Strassen's algorithms are used because the constants hidden in the Big O notation are usually significantly big for large enough matrices to take advantage of. Moreover, many multiplication algorithms are impractical due to numerical instabilities \cite{walks}. Fortunately, there have recently been fast and numerically stable implementations of the Strassen algorithm using GPUs (\cite{strassen_gpu}, \cite{recipes}, \cite{reliable_gpu}).


\section{HAR} \label{S:3}
This section explains the HAR algorithm and calculates its cost per iteration and mixing time for non-full dimensional polytopes, as defined in \cite{montiel_jd}.

 \subsection{Overview}
HAR can be described as follows. A \textit{walk} is initialized in a strict inner point of the polytope. At any iteration, a random direction is generated via independent normal variates. The random direction, along with the current point, generates a line set $L$, and its intersection with the polytope generates a line segment. The sampler selects a random point in $L$ and repeats the process. After a warm start, HAR for full-dimensional convex bodies has a cost per iteration $\OO(m_In)$ and a cost per sample of $\SO(m_I n^4)$ \cite{walks}.

In general, the non-deterministic mixing time of HAR is of $\SO(n^2 \gamma_{\kappa})$, where $ \gamma_{\kappa}$ is defined as
\[ 
 \gamma_{\kappa} = \inf_{R_{in}, R_{out}>0}\Bigg\{ \frac{R_{out}}{R_{in}} \| \mathcal{B}(x, R_{in}) \subseteq \Delta \subseteq \mathcal{B}(y, R_{out}) \ for \ some \ x,y \in \Delta\Bigg\},
\]
where $R_{in}$ and $R_{out}$ are the radii of an inscribed and circumscribed ball of the polytope $\Delta$, respectively, and $\mathcal{B}(q,R)$ is the ball of radius $R$ containing the point q. In essence,
$ \gamma_{\kappa}$ is the coefficient generated by the biggest inscribed ball and the smallest circumscribed ball of the polytope. That the mixing time depends on these parameters means that elongated polytopes are harder to sample. Implementations of HAR for convex bodies are typically analyzed after  pre-processing and invoking a warm start, meaning that the body in question is brought to a near isotropic position in $\SO(\sqrt{n})$, allowing the mixing time to be expressed as $\SO(n^3)$ \cite{walks}, \cite{geo-walk}, \cite{RHMC}, \cite{ballwalk}, \cite{original-har} and \cite{har_tommi}. For ease of comparison with the literature, the remainder of the paper assumes that the polytope has received proper pre-processing.
 
A HAR sampler must compute the starting point and find the line segment $L$ at each iteration. Additionally, a thinning factor (also called "burning rate") $\varphi(n)$ must be included to achieve a fair almost uniform distribution over the studied space \cite{har_tommi}. This means that after a warm start, the algorithm needs to drop $\varphi(n)$ sampled points for each desired i.i.d. observation. This thinning factor is known as the mixing-time, which is $\SO(n^3)$ in the case of polytopes (see \cite{walks}, \cite{har_tommi}, \cite{original-har}).

The HAR pseudocode proposed in \cite{montiel_jd} for full and non-full dimensional polytopes is presented in Algorithm \ref{alg:har}. It samples a collection $\mathcal{X}$ of $T$ uncorrelated points inside $\Delta$. We know the complexity of HAR for full dimensional polytopes, to find the cost per iteration and cost per sample of HAR for non-full dimensional polytopes will require analyzing the complexity of calculating the projection matrix. 

 
\subsection{Projection matrix}

The projection matrix $P_{\Delta^E}$ is computed from the equality matrix $ A^E $. Then, $P_{\Delta^E}$ allows any vector to be projected to the null space of $A^E$. In our case, the random direction vector $h$ lives in a full dimensional space, which means that if $m_E>0$, $h$ needs to be projected so that the line set $L$ lives in the same space as $\Delta$. The projection operation $P_{\Delta^E}h = d$ yields $A^Ed=0$. Then, $A^E(x+d)=b^E$ \cite{montiel_jd}. 
(This step is omitted if $m_E=0$.) 

The projection matrix is defined as
\begin{equation}
P_{\Delta^E} = I  - A'^E(A^EA'^E)^{-1}A^E .
\end{equation}

\begin{algorithm}[t] \label{alg:har}
 \caption{HAR pseudocode}
\SetAlgoLined
\KwResult{$\mathcal{X}$}
Initialization\;
    $t\gets 0$  (Sample point counter)\;
    $j\gets 0$ (Iteration counter)\;
    $\mathcal{X} =\emptyset$\;
    \textbf{Set} the total sample size $T$\;    
    \textbf{Set} a thinning factor $\varphi(n)$\;
    \textbf{Find} a strictly inner point of the polytope $\Delta$ and label it $x_{t=0,j=0}$\;
    \If{$(m_E>0)$}{Compute the projection matrix $P_{\Delta^E}$}
        \While{$t < T$}{
                \textbf{Generate} the direction vector $h \in \mathbb{R}^n$\;
                \eIf{$(m_E=0)$}{$d=h$}{$d=P_{\Delta^E}h$}
                \textbf{Find} the line set $L:=\{x|x=x_{t,j}+\theta d, x\in\Delta  \; \& \; \theta \in \mathbb{R} \}$\;
                $ j\gets j+ 1 $\;
                \textbf{Generate} a point uniformly distributed in $L \cap \Delta$ and label it $x_{t,j+1}$\;
                \If{$j==\varphi(n)$}{
                    $ \mathcal{X} = \mathcal{X} \cup x_{t,j} $\;
                    $ t\gets t + 1 $\;
                    $ j\gets 0 $\;
                }
 }
\end{algorithm}

\begin{lemma}
\label{lemma projection complexity}
If $m_E < n$, then the complexity of calculating $P_{\Delta^E}$ is $\OO(m_E^{\omega-2}n^2)$.
 \end{lemma}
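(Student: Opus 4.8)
The plan is to decompose the computation of $P_{\Delta^E} = I - A'^E(A^EA'^E)^{-1}A^E$ into its constituent matrix operations, bound the cost of each using the multiplication-complexity function $\mu$ together with the bound of \cite{rec_matrix}, and then show that a single operation asymptotically dominates the rest. Reading the defining expression from the inside out, the natural sequence is: (i) form the transpose $A'^E$, which is free; (ii) compute the Gram matrix $G := A^EA'^E \in \R^{m_E \times m_E}$; (iii) invert $G$; (iv) multiply the remaining factors $A'^E G^{-1} A^E$ to obtain an $n \times n$ matrix; and (v) subtract this from the identity. The hypothesis $m_E < n$ will be used throughout to identify the minimizing dimension in each call to $\mu$.

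First I would bound the cheap steps. Step (ii) multiplies an $m_E \times n$ matrix by an $n \times m_E$ matrix; since $m_E < n$ the smallest dimension is $m_E$, so $\mu$ gives cost $\OO(m_E^{\omega-1}n)$. Step (iii) inverts an $m_E \times m_E$ matrix, which by the standard equivalence between matrix inversion and multiplication costs $\OO(m_E^{\omega})$. Step (v) is an entrywise subtraction of two $n \times n$ matrices and costs $\OO(n^2)$.

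Next I would treat the dominant step (iv). Using associativity I group the product as $A'^E\big(G^{-1}A^E\big)$. The inner product $G^{-1}A^E$ multiplies an $m_E \times m_E$ matrix by an $m_E \times n$ matrix, whose smallest dimension is $m_E$, giving cost $\OO(m_E^{\omega-1}n)$ and producing an $m_E \times n$ matrix. The outer product then multiplies $A'^E \in \R^{n\times m_E}$ by this $m_E \times n$ matrix; here the shared middle dimension $m_E$ is the smallest, so $\mu$ yields $n \cdot m_E^{\omega-2} \cdot n = \OO(m_E^{\omega-2}n^2)$, the largest single contribution. (The alternative grouping $\big(A'^E G^{-1}\big)A^E$ gives the same bound.)

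Finally I would collect terms. The total cost is $\OO\big(m_E^{\omega-1}n + m_E^{\omega} + m_E^{\omega-2}n^2 + n^2\big)$. Using $m_E < n$ gives $m_E^{\omega-1}n \leq m_E^{\omega-2}n^2$ and $m_E^{\omega} \leq m_E^{\omega-2}n^2$, while $\omega \geq 2$ and $m_E \geq 1$ give $n^2 \leq m_E^{\omega-2}n^2$; hence every term is absorbed into $\OO(m_E^{\omega-2}n^2)$, as claimed. The only real obstacle is step (iv): one must choose an association of the triple product so that the $n \times n$ output is formed by contracting over the short dimension $m_E$, and one must invoke the inversion-equals-multiplication fact for step (iii); everything else is bookkeeping that leans on $m_E < n$.
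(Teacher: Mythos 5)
Your proof is correct and follows essentially the same route as the paper's: the same decomposition into forming the Gram matrix, inverting it, evaluating the triple product, and subtracting from the identity, with identical $\mu$-based cost bounds and the same dominant term $\OO(m_E^{\omega-2}n^2)$. The only cosmetic differences are that you associate the triple product as $A'^E\big(G^{-1}A^E\big)$ while the paper uses $\big(A'^E G^{-1}\big)A^E$ (both yield the same costs, as you note), and that your explicit verification that $m_E < n$ absorbs every lower-order term is slightly more careful than the paper's concluding sentence.
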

 \begin{proof}
Computing $P_{\Delta^E}$ is done in three matrix multiplications, one matrix-to-matrix subtraction, and one matrix inversion operation over $(A^EA'^E)$. The number of operations needed to calculate the inverse matrix depends on the algorithm used for matrix multiplication \cite{inversion}. The order of number of operations for computing $P_{\Delta^E}$ is the sum of the following: 
\begin{enumerate}
  \item Obtain $(A^EA'^E)$ in $\mathcal{O}(\mu_{A^E, A'^E})=\mathcal{O}(\mu(m_E,n,m_E)) =\mathcal{O}(m_E^{\omega-1}n)$ operations.
  \item Find the inverse $(A^EA'^E)^{-1}$ in $\mathcal{O}(m_E^{\omega})$, since  $(A^EA'^E)^{-1}$ has dimension $m_E \times m_E$.
  \item Multiply $ A'^E(A^EA'^E)^{-1}$ in $\mathcal{O}(\mu_{ A'^E,(A^EA'^E)^{-1}})=\mathcal{O}(\mu(n,m_E,m_E)) =\mathcal{O}(m_E^{\omega-1}n)$.
  \item Calculate $A'^E(A^EA'^E)^{-1}A^E$ in $\mathcal{O}(\mu_{ A'^E(A^EA'^E)^{-1},A^E})=\mathcal{O}(\mu(n,m_E,n)) =\mathcal{O}(m_E^{\omega-2}n^2)$.
  \item Subtract $ I  - A'^E(A^EA'^E)^{-1}A^E$ in $\OO(n^2)$. 
\end{enumerate}

These sum to $2 \times \mathcal{O}(m_E^{\omega-1}n) + \mathcal{O}(m_E^{\omega}) + \OO(m_E^{\omega-2}n^2)+\mathcal{O}(n^2)$. Hence the complexity of calculating $P_{\Delta^E}$ is $ \OO(\mu_{A'^E(A^EA'^E)^{-1},A^E})=\OO(m_E^{\omega-2}n^2)$.
\end{proof}

For simplicity, we will denote the complexity of computing $P_{\Delta^E}$ as $\OO(\mu_{P_{\Delta^E}})$.

\subsection{Non-full dimensional HAR}
We proceed to calculate the cost per sample of HAR for $m_E > 0$. We start by computing the cost per iteration in Lemma \ref{lemma cost per iteration har nf}.

\begin{lemma}\label{lemma cost per iteration har nf} 
The \textit{cost per iteration} of HAR for $0 \leq m_E$ is $\OO(\max\{ m_In,m_E^{\omega-2}n^2\})$.
\end{lemma}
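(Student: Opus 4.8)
The plan is to read off the individual operations that constitute one pass of the \textbf{while} loop of Algorithm~\ref{alg:har} (together with the one-time construction of $P_{\Delta^E}$ when $m_E>0$), bound the arithmetic cost of each using the multiplication function $\mu$ and Lemma~\ref{lemma projection complexity}, and then take the maximum. Concretely, a single iteration performs: (i) drawing the direction $h\in\R^n$; (ii) forming $d$, either as $d=h$ or, when $m_E>0$, as $d=P_{\Delta^E}h$; (iii) computing the line set $L$ and its intersection with $\Delta$; and (iv) sampling a uniform point on $L\cap\Delta$ and updating the current point. I would additionally account for building $P_{\Delta^E}$, which Lemma~\ref{lemma projection complexity} bounds by $\OO(m_E^{\omega-2}n^2)$.

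I would then bound each piece. Step (i) costs $\OO(n)$, being $n$ independent variates. In step (ii), applying the already-formed $n\times n$ matrix $P_{\Delta^E}$ to the vector $h$ is a matrix--vector product, so by the $d_3=1$ case it costs $\OO(\mu(n,n,1))=\OO(n^2)$. Step (iii) is dominated by the product $A^Id$ with $A^I\in\R^{m_I\times n}$, i.e.\ $\OO(\mu(m_I,n,1))=\OO(m_In)$, after which extracting the feasible interval for $\theta$ from the $m_I$ resulting inequalities is $\OO(m_I)$. Step (iv) costs $\OO(1)$ to draw $\theta$ and $\OO(n)$ to evaluate $x_{t,j}+\theta d$.

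It remains to combine these bounds. Summing and taking the maximum, the candidate dominant terms are $\OO(m_E^{\omega-2}n^2)$ from building $P_{\Delta^E}$, $\OO(n^2)$ from applying it, and $\OO(m_In)$ from the line-segment computation; the remaining contributions are $\OO(n)$ and are absorbed. Since $\omega\geq 2$, for any integer $m_E\geq 1$ we have $m_E^{\omega-2}\geq 1$, hence $\OO(n^2)\subseteq\OO(m_E^{\omega-2}n^2)$ and the projection-application cost is swallowed by the projection-construction cost. When $m_E=0$, step (ii) reduces to $d=h$, the projection term is taken to be $0$, and the cost collapses to $\OO(m_In)$, consistent with the stated formula. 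Taking the maximum of the two surviving terms yields the claimed cost per iteration $\OO(\max\{m_In,\,m_E^{\omega-2}n^2\})$ for all $0\leq m_E$.

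The only genuinely delicate point — the one I would flag as the crux — is the appearance of the factor $m_E^{\omega-2}$ in a \emph{per-iteration} bound. Applying a precomputed projection to one direction vector is merely $\OO(n^2)$ (indeed, applying the factored form $h - A'^E(A^EA'^E)^{-1}A^Eh$ is only $\OO(m_En)$), so the $m_E^{\omega-2}n^2$ term is in fact the cost of \emph{constructing} $P_{\Delta^E}$ via Lemma~\ref{lemma projection complexity}. One must therefore argue that charging this one-time setup to the per-iteration figure is legitimate as an upper bound, and verify via $\omega\geq 2$ that it dominates the $\OO(n^2)$ application cost so that no separate $n^2$ term survives in the maximum.
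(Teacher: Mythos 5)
Your proposal is correct and follows essentially the same route as the paper's proof: both decompose the iteration into the full-dimensional cost $\OO(m_In)$ plus the projection step, invoke Lemma~\ref{lemma projection complexity} for constructing $P_{\Delta^E}$, note that the $\OO(n^2)$ matrix--vector application is absorbed into $\OO(m_E^{\omega-2}n^2)$ when $m_E>0$, and take the maximum, with the $m_E=0$ case collapsing to $\OO(m_In)$. The ``delicate point'' you flag --- charging the one-time construction of $P_{\Delta^E}$ to the per-iteration figure --- is exactly what the paper does as well (it writes the projection cost per iteration as $\OO(n^2)+\OO(m_E^{\omega-2}n^2)$), so your treatment matches theirs, merely made explicit.
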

\begin{proof}
As seen in Algorithm \ref{alg:har}, the only difference between the full and non-full dimensional cases is the projection step $P_{\Delta^E}h=d$. Then, the cost per iteration is defined by the larger of the original cost per iteration $\OO(m_In)$ of HAR for $m_E=0$, and the extra cost induced by the projection when $m_E>0$.

Because $P_{\Delta^E}$ has dimension $n \times n $ and $h$ is an $n \times 1$ vector, $\mu_{P_{\Delta^E},h}=n^2$ and the complexity is $\OO(n^2)$. By Lemma \ref{lemma projection complexity}, finding $P_{\Delta^E}$ has an asymptotic complexity of $\OO(m_E^{\omega-2}n^2)$. Therefore, the cost of projecting $h$ at each iteration is $ \OO(n^2) + \OO(m_E^{\omega-2}n^2) = \OO(m_E^{\omega-2}n^2)$, since $m_E>0$. Therefore, the cost per iteration for $m_E >0$ is $\OO(\max\{ m_In,m_E^{\omega-2}n^2)\})$. 
If $m_E=0$, then the coefficient $\max\{ m_In,m_E^{\omega-2}n^2)\}$ equals $\max\{m_In,0\}=m_In$ and the cost per sample is $\SO(\max\{m_In,0)\})=\SO(m_In)$.
\end{proof}

Having calculated the cost per iteration of HAR, we can proceed to Theorem \ref{th_cost_per_sample_har_nf}.
\begin{theorem}\label{th_cost_per_sample_har_nf}
The \textit{cost per sample} of HAR for $\ 0 \leq m_E$ is $\SO(n^3\max\{ m_In,m_E^{\omega-2}n^2\})$ after proper pre-processing and a warm start.
\begin{proof}
According to \cite{walks}, the cost per sample of a sampling algorithm is its mixing time complexity multiplied by its cost per iteration. By Lemma \ref{lemma cost per iteration har nf}, the cost per iteration is $\OO(\max\{ m_In,m_E^{\omega-2}n^2\})$. Moreover, \cite{har_tommi} states that the mixing time, after a warm start, of HAR is $\SO(n^3)$. Therefore, the cost per sample is $\SO(n^3\max\{ m_In,m_E^{\omega-2}n^2\})$. 

Recall that if $m_E=0$ the \textit{cost per sample} is $\SO(n^3\max\{m_In,0)\})=\SO(m_In^4)$ that is the special case of HAR for full dimensional polytopes. 
\end{proof}
\end{theorem}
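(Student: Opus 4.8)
The plan is to assemble the result from two ingredients that are already in hand, so the proof is essentially a substitution. First I would invoke the general accounting principle from \cite{walks}: for any MCMC sampling scheme, the cost per sample factorizes as the product of its mixing-time complexity and its cost per iteration. This immediately reduces the theorem to plugging in the two quantities we have established, rather than re-deriving anything about the walk itself.

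Next I would supply those two quantities. The cost per iteration comes directly from Lemma \ref{lemma cost per iteration har nf}, namely $\OO(\max\{m_In, m_E^{\omega-2}n^2\})$; note that this single expression already handles both the full-dimensional ($m_E=0$) and non-full-dimensional ($m_E>0$) regimes through the $\max$, so no case split is needed at this stage. The mixing time I would take from \cite{har_tommi}, which after bringing the polytope to near isotropic position and invoking a warm start is $\SO(n^3)$; here the $\SO$ notation absorbs the $\log$ factors together with the dependence on the error bound and the isotropic constant. Multiplying the two then yields $\SO(n^3)\cdot\OO(\max\{m_In, m_E^{\omega-2}n^2\}) = \SO(n^3\max\{m_In, m_E^{\omega-2}n^2\})$, which is the claimed bound.

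Finally I would verify the stated special case by setting $m_E=0$: the $\max$ collapses to $m_In$, and $n^3\cdot m_In = m_In^4$, recovering the familiar full-dimensional cost per sample $\SO(m_In^4)$ reported in \cite{walks}. The only delicate point — and it is minor — is the $\SO$-versus-$\OO$ bookkeeping, i.e.\ justifying that the product of a soft-O mixing time and a hard-O per-iteration cost remains in soft-O. This follows at once because $\SO$ is closed under multiplication by $\OO$ factors (the suppressed polylogarithmic terms from the mixing time simply carry through), so there are no cross-terms to track and the argument is complete.
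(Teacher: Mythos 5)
Your proposal is correct and follows essentially the same route as the paper's own proof: both invoke the cost-per-sample equals mixing-time times cost-per-iteration principle from \cite{walks}, substitute the $\OO(\max\{m_In, m_E^{\omega-2}n^2\})$ bound from Lemma \ref{lemma cost per iteration har nf} and the $\SO(n^3)$ warm-start mixing time from \cite{har_tommi}, and confirm that the $m_E=0$ case collapses to $\SO(m_In^4)$. Your closing remark justifying that the product of an $\SO$ factor and an $\OO$ factor stays in $\SO$ is a small point of rigor the paper leaves implicit, but it does not change the argument.
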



\section{Matrix Hit-and-Run (MHAR)}\label{S:4}

This section details our new algorithm, Matrix Hit-And-Run (\textbf{MHAR}). MHAR has a lower cost per sample than does HAR. Furthermore, making $z$ simultaneous walks with MHAR requires fewer operations than does running $z$ HAR walks in parallel. The "padding" hyper-parameter $z$ allows the concatenation of multiple directions $d$ and samples $x$ to form matrices $D$ and $\mathcal{X}$, respectively. Each column of these matrices represents a walk over the polytope. This modification permits the use of efficient matrix-to-matrix operations to simultaneously project many directions $d$ and find their respective line segments.

 
\subsection{MHAR preliminaries}
MHAR explores the polytope using simultaneous walks by drawing multiple directions $d$ from the n-dimensional hypersphere. Each independent walk has the same mixing-time as with HAR, but a lower cost per iteration. Instead of running separate threads, we "batch" the walks by "padding" vector $x$ and $d$ with $z$ columns, creating the matrices $X=(x^1|\dots|x^k|\dots|x^z)$ and $D=(d^1|\dots|d^k|\dots|d^z)$. Super index $k$ denotes the $k$th walk represented by the $k$th column in the padded matrix. The algorithm then adapts the steps in HAR to keep track of each independent walk and recast the operations as matrix-to-matrix. The algorithm is tailored for exploiting cutting-edge matrix routines that exploit the architectures of machines like GPUs, cache memories, and multiple cores.

The main difference with HAR when running $z$ instances on multiple independent cores ($z$-HAR) is the estimation of $D=(d^1|\dots|d^k|\dots|d^z)$ and the line segments $L^k$ in a simultaneous fashion for all $z$-walks. In both, $z$-HAR and MHAR, each walk is oblivious of the others after a warm start, which guarantee a constant mixing-time among all $z$-walks \cite{montiel_jd} \cite{starting_point}. 

Algorithm \ref{alg:mhar} presents the pseudocode for MHAR.  
 
\begin{algorithm}[h]
\label{alg:mhar}
\caption{MHAR pseudocode}
\KwResult{$\mathcal{X}$}
 Initialization\;
    $t\gets 0$  (Sample point counter)\;
    $j\gets 0$ (Iteration counter)\;
    $z \gets \max\{m_I,n\} + 1$\;
    $\mathcal{X} = \emptyset$\;
    \textbf{Set} the total sample size $T$\;    
    \textbf{Set} a thinning factor $\varphi(n)$\;
    \textbf{Find} a strictly inner point of the polytope $\Delta$ and label it $x_{t,j}$\;
    \textbf{Set} $x_{t,j}^k=x_{t,j}, \ \forall k \in \{1,...,z\}$\;
    \textbf{Initialize} $X_{t,j}=(x_{t,j}^1|...|x_{t,j}^k|...|x_{t,j}^z) \in \R^{n \times z}$\;
    \If{$(m_E>0)$}{Compute the projection matrix $P_{\Delta^E}$}
        \While{$t < T$}{
                \textbf{Generate} $H=(h^1|...|h^k|...|h^z) \in \mathbb{R}^{n \times z}$, the direction matrix\;
                \eIf{$(m_E=0)$}{$D=H$;}{$D=P_{\Delta^E}H=(d^1|...|d^k|...|d^z)$;}
                \textbf{Find} the line sets $\Big\{L^k:=\{x|x=x_{t,j}^k+\theta^k d^k, \ x\in\Delta  \; \& \; \theta^k \in \mathbb{R} \}\Big\}_{k=1}^z$\;
                $ j\gets j + 1 $\;
                \textbf{Generate} a point uniformly distributed in each $L^k$ and label it $x_{t,j}^k$ in $X_{t,j}$\;
                \If{$j==\varphi(n)$}{
                    $ \mathcal{X} = \mathcal{X} \cup \{x^1_{t,j}, ..., x^z_{t,j}\} $\;
                    $ t\gets t + z $\;
                    $ j\gets 0 $\;
                }
 }
\end{algorithm}

\subsection{Starting point}
In general, the cost of finding the starting point is excluded from the complexity analysis because it is independent of the mixing-time. However, we present it here for completeness even though the literature assumes a warm start in determining cost per sample (\cite{walks}, \cite{har_tommi}, \cite{RHMC}).

MHAR needs to be initialized by a point in the relative interior of the polytope. We suggest Chebyshev's center of the polytope, which is the center of the largest inscribed ball. For polytopes, Chebyshev's center can be formulated as a linear optimization problem and solved using standard methods. 

Chebyshev's center is presented in Model (\ref{CCHC}).
\begin{equation}\label{CCHC}
\begin{aligned}
 \max \limits_{{x\in \R^n, r\in\R}} \ &r,\\
s.t \quad A^Ex&=b^E,\\ 
 (a^I_i)^Tx + r||a^I_i||_2 &\leq b_i^I, \ \forall i = 1,...,m_I,
\end{aligned}
\end{equation}
where $a^I_i$ and $b^I_i$ represent the $i$th row of matrix $A^I$ and $i$th entry from vector $b^I$, respectively. Model (\ref{CCHC}) has the original $m$ restrictions plus one additional variable $r$. Hence, the size of the problem has $m$ constraints and $n+1$ variables. Then, calculating the $||\cdot||_2$ coefficients takes $\OO(mn)$. Thus, it can be formulated and solved in $\mathcal{O}(n^{\omega})$ using Vaidya's algorithm \cite{fast_linear} for linear optimization. After solving Model (\ref{CCHC}), we use $x$ as the starting point $x_{t=0, j=0}$ for all walks and draw independent walking directions. 
The matrix $X_{t,j} \in \R^{n \times z}$ introduced in Algorithm \ref{alg:mhar} is the algorithmic version of  $X$, and it summarizes the state of all walks, where each $k$th column represents the current point of walk $k$ at iteration $\{t,j\}$. Formally we say $X_{t,j} = (x_{t,j}^1|...|x_{t,j}^k|...|x_{t,j}^z)$ where $x_{t,j}^k \in \R^{n \times 1} \ \forall k \in \{1,...,z\}$.

\subsection{Generating D}

Because the target distribution of HAR and MHAR is uniform, we follow the procedure established in \cite{montiel_jd} and \cite{original-har} that uses the Margsalia method \cite{mar} to generate a random vector $h$ from the hypersphere by generating $n$ i.i.d. samples from a standard normal distribution $\mathcal{N}(0, \ 1)$. However, instead of generating a single direction vector $d \in \R^ n$, we create matrices $H,\ D \in \R^{n \times z}$, where each element of the matrix corresponds to an independent execution of the Box-Muller method \cite{box-muller} bounded by $\OO(nz)$. If the polytope is full dimensional, $H=D$ and no projection operation is needed. Otherwise, the projection matrix $P_{\Delta^E}$ is calculated as in \S \ref{S:3}, and Lemma \ref{lemma projection complexity} bounds the number of operations as $\OO(m_E^{\omega-2}n^2)$. 

Matrices $H$ and $D$ can be visualized as
\begin{align}
H = (h^1 | ... | h^k | ... | h^z),\ h^k \in \R^n,\ \forall k \in \{1,...,z\}, \\
D = P_{\Delta^E}H = (d^1 | ... | d^k | ... | d^z),\ d^k \in \R^n,\ \forall k \in \{1,...,z\}.
\end{align}
Each column $h^k$ can be projected by the operation $D=P_{\Delta^E}H$. Hence, each column of $D$ satisfies the restrictions in $\Delta^E$ and serves as a direction $d$ for an arbitrary walk $k$. In principle, $z$ can be any number in $\N$, where $z=1$ is the special case that recovers the original HAR.

\begin{lemma}\label{lemma complexity generating D}
The complexity of generating matrix D in MHAR given $P_{\Delta^E}$ and $\max\{m_I, n\} \leq z$ is $\OO(nz)$ if $m_E=0$, and $\OO(n^{\omega-1}z)$ if $m_E>0$.
\begin{proof}
Generating $H$ has complexity $\OO(nz)$ using the Box-Muller method. If $m_E=0$, then $D=H$, implying a total asymptotic cost $\OO(nz)$. If $m_E>0$, then $D=P_{\Delta^E}H$, whose cost $\OO(\mu_{P_{\Delta^E},H})=\OO(n^{\omega-1}z)$ given by $\max\{m_I, n\} \leq z$, needs to be included. $\OO(n^{\omega-1}z)$ bounds $\OO(nz)$. Therefore, the total cost of computing $D$ for $m_E>0$ is bounded by $\OO(n^{\omega-1}z)$.
\end{proof}
\end{lemma}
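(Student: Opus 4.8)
The plan is to decompose the cost of generating $D$ into two independent contributions and then show that in each regime ($m_E = 0$ versus $m_E > 0$) one contribution dominates. First I would account for the cost of generating the random direction matrix $H \in \R^{n \times z}$. Since each of the $nz$ entries is produced by one independent execution of the Box-Muller method, and we assume a random bit stream yields a random number in $\OO(1)$, the total cost of forming $H$ is $\OO(nz)$. This bound holds regardless of the value of $m_E$.

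Next I would split into the two cases stated in the lemma. In the full-dimensional case $m_E = 0$, no projection is required because the line sets already live in the same space as $\Delta$; the algorithm sets $D = H$ directly, so the total cost is exactly the cost of generating $H$, namely $\OO(nz)$. This case is immediate. For the non-full-dimensional case $m_E > 0$, I would invoke the projection step $D = P_{\Delta^E} H$. Here the key computation is the matrix product of $P_{\Delta^E} \in \R^{n \times n}$ with $H \in \R^{n \times z}$. Using the $\mu$ function from the preliminaries, this product costs $\OO(\mu_{P_{\Delta^E}, H})$, and the main task is to evaluate $\mu(n, n, z)$ correctly.

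The crux of the argument is the dimension-ordering in the $\mu$ function, and this is the step I expect to be the main obstacle. The matrix multiplication coefficient $\mu(n, n, z)$ depends on which of $\{n, n, z\}$ is the minimum. The hypothesis $\max\{m_I, n\} \leq z$ guarantees $n \leq z$, so the minimum dimension is $n$ (one of the two factors equal to $n$), placing us in the branch where the minimum sits in position $n_2$. Substituting $n_1 = n$, $n_2 = n$, $n_3 = z$ into the corresponding case yields $n \cdot n^{\omega - 2} \cdot z = n^{\omega - 1} z$, so the projection costs $\OO(n^{\omega - 1} z)$. I would note that the cost of computing $P_{\Delta^E}$ itself is a one-time preprocessing cost and is excluded by the phrase ``given $P_{\Delta^E}$'' in the statement, so it does not enter here.

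Finally I would combine the contributions for the case $m_E > 0$. The total cost is the sum $\OO(nz) + \OO(n^{\omega-1}z)$ of generating $H$ and applying the projection. Since $\omega \geq 2$ implies $n^{\omega - 1} \geq n$, the projection term dominates the generation term, so the sum collapses to $\OO(n^{\omega - 1} z)$. This establishes both claims: $\OO(nz)$ when $m_E = 0$ and $\OO(n^{\omega - 1} z)$ when $m_E > 0$, completing the proof.
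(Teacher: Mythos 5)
Your proposal is correct and follows essentially the same route as the paper's proof: bound the generation of $H$ by $\OO(nz)$ via Box--Muller, set $D=H$ in the case $m_E=0$, and for $m_E>0$ evaluate $\OO(\mu_{P_{\Delta^E},H})=\OO(n^{\omega-1}z)$ using $n\leq z$, which dominates $\OO(nz)$. Your explicit walk-through of which branch of the $\mu$ function applies is simply a more detailed rendering of the step the paper states in one line.
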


Lemma \ref{lemma complexity generating D} shows that if $m_E>0$, the cost of generating new directions $d$ does not scale as if had used $z$ parallel HARs. In the HAR case, the operations required would have been carried out in $\OO(z \mu_{P_{\Delta^E},h})=\OO(zn^2)$, averaging $\OO(\frac{zn^2}{z})=\OO(n^2)$ per direction. In contrast, MHAR is $\OO(n^{\omega-1}z)$, averaging  $\OO(\frac{n^{\omega-1}z}{z})=\OO(n^{\omega-1})$ per direction. When $m_E=0$, the number of operations for both cases is the same.

\subsection{Finding the line sets}
Given matrices $X$ and $D$, we now obtain the line sets $\{L^k\}_{k=1}^{z}$:
\begin{equation}
\Big\{L^k:=\{x|x=x^k+\theta^k d^k, \ x\in\Delta,  \; \mbox{and} \; \theta^k \in \mathbb{R} \}\Big\}_{k=1}^z.
\end{equation}

Each $\theta^k$ characterizes the line set for column $x^k$. The "padded" column-wise representation of restrictions $\Delta^I$ is
\begin{equation}
A^{I}X=\begin{pmatrix}
a_1^Ix^1 & \dots &  a_1^Ix^k \\ 
\vdots & \ddots & \vdots\\ 
a_{m_I}^Ix^1 & \dots &  a_{m_I}^Ix^k
\end{pmatrix}
\leq 
\begin{pmatrix}b_1^I \\ \vdots \\  b_{m_I}^I   \end{pmatrix}=b^{I},
\end{equation}
where each element from the left matrix must be less than or equal to the corresponding element (row-wise) in vector $b^I$.
The restrictions for an arbitrary $x^k$ can be rewritten row-wise so that the left side and right side are scalars:
\begin{align} 
a_i^Ix^k \leq b^I_i, \ \forall i \in \{1, \dots, m_I \}.
\end{align}
Then, each $\theta^k$s must satisfy
\begin{align} 
(a_i^Ix^k + \theta^k a_i^Id^k) < b^I_i, \ \forall i \in \{1, \dots, m_I \}.
\end{align}

Rearranging the terms obtains restrictions for each walk $k$, where each $\theta^k$ must be bounded by its respective set of lambdas $\{\lambda^k_i\}_{i=1}^{m_I}$, as follows:
\begin{align}
\theta^k &< \lambda_i^k = \frac{b_i^I - a_i^Ix^k}{a_i^I d^k}, \quad if \ a_i^I d^k>0, \\
\theta^k &> \lambda_i^k = \frac{b_i^I - a_i^Ix^k}{a_i^I d^k}, \quad if \ a_i^I d^k<0.
\end{align}
Hence, a walk's boundaries are represented by
\begin{align}
 \lambda_{min}^k=\max\ \{\lambda_i^k \ | \ a_i^I d^k < 0\},   \\
 \lambda_{max}^k=\min\ \{\lambda_i^k \ | \ a_i^I d^k > 0\}.  
\end{align}
These lambdas can be used to construct the intervals $\Lambda^k=(\lambda^k_{\min},$ $\lambda^k_{\max}), \ k\in \{1,...,z\}$. 
By construction, if $\theta^k \in \Lambda^k$ and $x^k \in \Delta$, then $x^k + \theta^kd^k \in L^k$, since $A^I(x^k + \theta^k d^k) \leq b^I $ and $A^E(x^k + \theta^k d^k)=b^E$. The line segment can be found simply by evaluating $\{\Lambda^k\}_{k=1}^{z}$, because $x^k$ and $D$ were computed previously. We can now state Lemma \ref{lemma complexity generating L}.

\begin{lemma}\label{lemma complexity generating L}
The complexity of generating all line sets $\{L^k\}_{k=1}^{z}$ in MHAR given $D$, $X$, and $\max\{m_I, n\} \leq z$ is bounded by $\OO(m_In^{\omega -2}z) \ if \ n \leq m_I$, and by $\OO(m_I^{\omega -2}nz)$ otherwise.
\begin{proof}
All $\Lambda^k$s can be obtained as follows:
\begin{enumerate}
    \item Obtain matrix $A^IX$ in $\OO(\mu_{A^I,X})$. This is done in $\OO(m_In^{\omega -2}z) \ if \ n \leq m_I$, and in $\OO(m_I^{\omega -2}nz)$ otherwise.
    \item Compute $B^I - A^IX$, where $B_I=(b^I|...|b^I) \in \R^{m^I \times z}$, which takes $\OO(m_Iz)$ operations.
    \item Calculate $A^ID$, which is bounded by  $\OO(\mu_{A^I,D})$, which is done in $\OO(m_In^{\omega -2}z) \ if \ n \leq m_I$, and in $\OO(m_I^{\omega -2}nz)$ otherwise.
    \item Divide $\frac{B^I - A^IX}{A^ID}$ (entry-wise) to obtain all $\lambda^k_i$. All the necessary point-wise operations for this calculation have a combined order of $\OO(m_Iz)$. 
    \item For each $k \in \{1,...,z\}$, find which coefficients $a_i^I d^k$ are positive or negative, which takes $\OO(m_Iz)$.
    \item For each $k \in \{1,...,z\}$, find the intervals $\lambda_{min}^k=\max\ \{\lambda_i^k \ | \ a_i^I d^k < 0\}$ and $ \lambda_{max}^k=\min\ \{\lambda_i^k \ | \ a_i^I d^k > 0\}$, which can be done in $\OO(m_Iz)$.
\end{enumerate} 

This procedure constructs all the intervals $\Lambda^k=(\lambda_{\min}^k, \lambda_{\max}^k)$. The complexity of this operation is bounded by $\OO(\mu_{A^I,X}) = \OO(\mu_{A^I,D})$. Hence, the complexity of finding all line sets is bounded by $\OO(m_In^{\omega -2}z) \ if \ n \leq m_I$, and by $\OO(m_I^{\omega -2}nz)$ otherwise. 
\end{proof}
\end{lemma}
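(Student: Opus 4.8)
The plan is to follow the same decomposition used for computing $P_{\Delta^E}$ in Lemma~\ref{lemma projection complexity}: list the atomic operations needed to assemble every interval $\Lambda^k$, bound each one separately, and then argue that a single class of operations dominates. First I would observe that every column $x^k$ evolves independently and that the bounds $\lambda^k_{\min}, \lambda^k_{\max}$ are read off from the two matrices $A^IX$ and $A^ID$ once those are formed. Hence the computation splits into (i) the two matrix products $A^IX$ and $A^ID$, each with $A^I \in \R^{m_I \times n}$ and $X, D \in \R^{n \times z}$, and (ii) a fixed number of passes over $m_I \times z$ arrays — the subtraction $B^I - A^IX$, the entry-wise division, the sign test on $a_i^I d^k$, and the column-wise $\max$/$\min$ reductions that produce $\lambda^k_{\min}$ and $\lambda^k_{\max}$. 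Each pass in (ii) touches $\OO(m_Iz)$ entries.

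The key step is to show the products in (i) dominate (ii) and to evaluate their cost via $\mu$. Producing the $m_I \times z$ output of $A^IX$ already forces at least $m_Iz$ operations, so $\OO(m_Iz) \subseteq \OO(\mu_{A^I,X})$ and every pass in (ii) is absorbed; the same holds for $A^ID$. It then remains to evaluate $\mu_{A^I,X} = \mu(m_I, n, z)$. The hypothesis $\max\{m_I,n\} \leq z$ is what makes this clean: it guarantees $z$ is never the smallest of the three dimensions, so the third branch of $\mu$ is never selected. Splitting on the remaining comparison, when $n \leq m_I$ the minimum is $n$ and the second branch gives $\mu(m_I,n,z) = m_In^{\omega-2}z$; when $n > m_I$ the minimum is $m_I$ and the first branch gives $\mu(m_I,n,z) = m_I^{\omega-2}nz$. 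Since $D$ has the same shape as $X$, $\mu_{A^I,D} = \mu_{A^I,X}$, and the two stated bounds follow.

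I expect the only real subtlety to be the bookkeeping around the $\mu$ function rather than any analytic difficulty: one must verify that the constraint $\max\{m_I,n\} \leq z$ eliminates the $n_3$-minimal branch, and that the two regimes $n \leq m_I$ and $n > m_I$ map onto the $n_2$-minimal and $n_1$-minimal branches, respectively. A secondary point worth stating explicitly is that the column-wise reductions in step (ii) remain linear in the number of entries, $\OO(m_Iz)$, so that no superlinear factor sneaks in to challenge the dominance of the matrix products over the cheaper element-wise work.
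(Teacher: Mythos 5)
Your proposal is correct and follows essentially the same route as the paper's proof: decompose the work into the two products $A^IX$ and $A^ID$ plus $\OO(m_Iz)$ element-wise passes, then evaluate $\mu_{A^I,X}=\mu_{A^I,D}$ under the hypothesis $\max\{m_I,n\}\leq z$ to get the two regime-dependent bounds. Your treatment is in fact slightly more careful than the paper's, since you explicitly justify why the $\OO(m_Iz)$ passes are absorbed by the matrix products and why the hypothesis on $z$ rules out the $n_3$-minimal branch of $\mu$, both of which the paper asserts without comment.
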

Lemma \ref{lemma complexity generating L} bounds the complexity of finding the line sets at any iteration of MHAR. This leaves only analyzing the cost of choosing a new sample.

\subsection{Choosing samples}
The following lemma bounds the complexity of choosing a new $X_{t,j+1}$ or $X_{t+z,0}$ given $\Lambda^k \ \forall k\in \{1,...,z\}$. The new samples will be padded to create the matrix $X_{t,j+1}=(x_{t,j+1}^1 | \dots | x_{t,j+1}^k)$ to be used in the next iteration.
\begin{lemma}\label{lemma complexity sample pick}
Sampling $z$ new points given $\{\Lambda^k\}_{k=1}^z$ has complexity $\OO(zn)$.
\begin{proof}
Selecting a random $\theta^k \in \Lambda^k$ takes $\OO(1)$. Sampling a new point $x^k_{t,j+1} = x^k_{t,j} + \theta d^k_{t,j}$ has complexity $\OO(n)$ because it requires $n$ scalar multiplications and $n$ sums. Then, sampling all new $x_{t,j+1}^k$ points is bounded by $\OO(zn)$.
\end{proof}
\end{lemma}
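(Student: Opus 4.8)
The plan is to decompose the sampling of each of the $z$ new points into two atomic operations, bound each separately, and then sum over the $z$ walks. For a fixed walk $k$ I am given the interval $\Lambda^k = (\lambda_{\min}^k, \lambda_{\max}^k)$ together with the current point $x^k_{t,j}$ and the direction $d^k$, all of which were already produced by Lemmas \ref{lemma complexity generating D} and \ref{lemma complexity generating L}. Crucially, their construction cost is \emph{not} re-counted here, so the accounting begins from these quantities as inputs and isolates only the incremental cost of selecting a step and forming the next point.

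First I would draw $\theta^k$ uniformly on $\Lambda^k$, realizing it by the affine reparametrization $\theta^k = \lambda_{\min}^k + u^k(\lambda_{\max}^k - \lambda_{\min}^k)$, where $u^k$ is a uniform variate on $(0,1)$. Invoking the standing assumption from \S\ref{S:1} that a random number is produced in $\OO(1)$, this requires only a constant number of scalar operations, hence $\OO(1)$ per walk.

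Second I would form the update $x^k_{t,j+1} = x^k_{t,j} + \theta^k d^k$. This is a scalar-times-vector product followed by a vector addition in $\R^n$, i.e. $n$ multiplications and $n$ additions, giving $\OO(n)$ per walk. Since this term dominates the constant cost of drawing $\theta^k$, the total per-walk cost is $\OO(n)$. Summing over $k \in \{1,\dots,z\}$ yields $z\cdot\OO(n) = \OO(zn)$, the claimed bound.

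I do not expect a genuine obstacle here, as the argument is a direct counting of elementary operations. The only point requiring care is bookkeeping: I must treat $\Lambda^k$, $x^k_{t,j}$, and $d^k$ as given inputs whose costs are already amortized in the preceding lemmas, rather than recomputing them, so that this lemma cleanly captures the marginal cost of the sample-selection and point-update step and composes correctly with the other per-iteration bounds.
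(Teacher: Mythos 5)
Your proposal is correct and follows essentially the same route as the paper's proof: an $\OO(1)$ draw of $\theta^k$ (justified by the standing assumption of constant-cost random number generation) followed by the $\OO(n)$ vector update $x^k_{t,j+1} = x^k_{t,j} + \theta^k d^k$, summed over the $z$ walks to give $\OO(zn)$. The only differences are presentational—you make the uniform reparametrization and the input-cost bookkeeping explicit, which the paper leaves implicit—so there is nothing to correct.
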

Having concluded the complexity analysis for each step of the loop, we next calculate the cost per iteration and proceed to measure the cost per sample.

\subsection{Iteration and sampling costs of MHAR}

The asymptotic behavior of each operation that comprises the main loop of MHAR when $\max\{n,m_I\} \leq z$ is presented in Table \ref{complexity_mhar}. The cost of finding the starting point is excluded (\cite{walks}, \cite{har_tommi}).
\vspace{-.2cm}

\begin{table}[h!]
\setstretch{1.5} 
\caption{Asymptotic cost per sample of MHAR at each step}
\label{complexity_mhar}
\vspace{.2cm}
\centering
\begin{tabular}{ |p{3.7cm}||p{2.2cm}||p{2.2cm}||p{2.2cm}||p{2.2cm}|  }
 \hline
 \multicolumn{5}{|c|}{MHAR complexity at each step, $(n,m) < z$} \\
\hline
Operation & $m_E=0, \quad \ $ $\quad n \leq m_I.$ & $m_E=0, \quad \ $ $ \ n>m_I.$ &  $m_E>0,\quad \ $ $ \ n \leq m_I.$ & $m_E>0,\quad \ $ $ \ n>m_I.$ \\
\hline 
1.Projection matrix &  $\OO(1)$ & $\OO(1)$  & $\OO(m_E^{\omega-2}n^2)$ & $\OO(m_E^{\omega-2}n^2)$ \\
\hline
 2.Generating $D$ & $\OO(nz)$ & $\OO(nz)$ & $\OO(n^{\omega-1}z)$ & $\OO(n^{\omega-1}z)$\\
   \hline
 3.Finding $\{L^k\}_{k=1}^z$ &   $\OO(m_In^{\omega -2}z)$   & $\OO(m_I^{\omega -2}nz)$ & $\OO(m_In^{\omega -2}z)$ & $\OO(m_I^{\omega -2}nz)$\\
    \hline
 4.Sampling all $x_{t,j+1}^k$ & $\OO(nz)$  & $\OO(nz)$  &  $\OO(nz)$  &  $\OO(nz)$ \\
 \hline
\end{tabular} 
\end{table}

The following  lemmas will help bound the cost per iteration of MHAR. Lemmas \ref{lemma me=0 n<m} and \ref{lemma me=0 m<n} establish the full dimensional case for ($n \leq m_I$) and ($n > m_I$), respectively. Lemmas \ref{lemma 0<me n<m} and \ref{lemma 0<me m<n} do likewise in the non-full dimensional case for ($n \leq m_I$) and ($n > m_I$), respectively. 

Figure \ref{tree1} summarizes these results as follows.
\begin{figure}[h!]
\centering
\resizebox{.7\textwidth}{!}{
\begin{tikzpicture}[grow=right, sloped]
\node[bag] {Value of $m_E$}
    child {
        node[bag] {$m_I$ \textit{vs} $n$}        
            child {
                node[end, label=right:
                    {$\OO(m_I n^{\omega -2}z)$}] {}
                edge from parent
                node[above] {}
                node[below]  {$n \leq m_I$}
            }
            child {
                node[end, label=right:
                    {$\OO(n^{\omega -1}z)$}] {}
                edge from parent
                node[above] {$n > m_I $}
                node[below]  {}
            }
            edge from parent 
            node[above] {}
            node[below]  {$0 < m_E$}
    }
    child {
        node[bag] {$m_I$ \textit{vs}. $n$}        
        child {
                node[end, label=right:
                    {$\OO(m_In^{\omega -2}z)$}] {}
                edge from parent
                node[above] {}
                node[below]  {$n \leq m_I$}
            }
            child {
                node[end, label=right:
                    {$\OO(m_I^{\omega -2}nz)$}] {}
                edge from parent
                node[above] {$n > m_I $}
                node[below]  {}
            }
        edge from parent         
            node[above] {$m_E=0$}
            node[below] {}
    };
\end{tikzpicture}
}
\caption{Asymptotic behavior of the cost per iteration of MHAR.}
\label{tree1}
\end{figure}
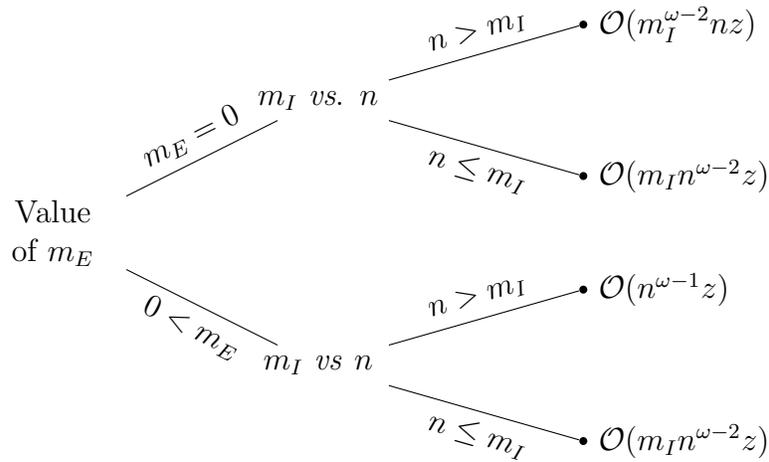

\begin{lemma}\label{lemma me=0 n<m}
Assume $m_E = 0$, $\max\{n,m\} < z$, and $n \leq m_I$. Then, the \textit{cost per iteration} of MHAR is $\OO(m_In^{\omega -2}z)$, which is the number of operations needed for finding all line sets $\{L^k\}_{k=1}^z$.
\end{lemma}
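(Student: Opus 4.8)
The plan is to obtain the cost per iteration by summing the asymptotic costs of the four operations that constitute the body of the main loop of Algorithm \ref{alg:mhar}, and then to argue that under the stated hypotheses one of them dominates the rest. Note first that since $m_E=0$ we have $m=m_I$, so the hypothesis $\max\{n,m\}<z$ supplies the side condition $\max\{m_I,n\}\leq z$ required by Lemmas \ref{lemma complexity generating D} and \ref{lemma complexity generating L}. I would then read off the column of Table \ref{complexity_mhar} corresponding to $m_E=0$ and $n \leq m_I$, whose entries are justified by the earlier lemmas: the projection step costs $\OO(1)$ (no projection is performed when $m_E=0$); generating $D$ costs $\OO(nz)$ by Lemma \ref{lemma complexity generating D} with $m_E=0$; finding all line sets $\{L^k\}_{k=1}^z$ costs $\OO(m_In^{\omega-2}z)$ by Lemma \ref{lemma complexity generating L} in the regime $n \leq m_I$; and sampling all new points costs $\OO(nz)$ by Lemma \ref{lemma complexity sample pick}. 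Thus the cost per iteration is $\OO(1)+\OO(nz)+\OO(m_In^{\omega-2}z)+\OO(nz)$.

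The second step is to show that the line-set term $\OO(m_In^{\omega-2}z)$ absorbs the other three, so that the finite sum collapses to its largest summand. The term $\OO(1)$ is trivially dominated. For the two $\OO(nz)$ contributions it suffices to establish $n \leq m_In^{\omega-2}$. Since the matrix-multiplication coefficient satisfies $\omega \geq 2$, we have $n^{\omega-2}\geq 1$, and combining this with the hypothesis $n \leq m_I$ gives $m_In^{\omega-2}\geq m_I \geq n$. Multiplying through by $z$ yields $nz \leq m_In^{\omega-2}z$, so both the direction-generation and the sampling steps are dominated by the line-set step. Hence the cost per iteration is $\OO(m_In^{\omega-2}z)$, which is exactly the claimed bound and, as asserted, coincides with the cost of finding all the line sets.

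I expect the only genuine content to be the elementary inequality $n \leq m_In^{\omega-2}$; everything else is bookkeeping that simply quotes the already-established lemmas and Table \ref{complexity_mhar}. The single point deserving care is making explicit that $\omega \geq 2$ (the conjectured lower bound $\Omega(n^2)$ for matrix multiplication guarantees $\omega$ cannot fall below $2$), because without it $n^{\omega-2}$ could be smaller than $1$ and the domination argument would break down.
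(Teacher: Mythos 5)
Your proof is correct and follows essentially the same route as the paper's: enumerate the costs of the four loop steps via Lemmas \ref{lemma projection complexity}, \ref{lemma complexity generating D}, \ref{lemma complexity generating L}, and \ref{lemma complexity sample pick}, then show the line-set term dominates using $n \leq m_I$ and $n^{\omega-2} \geq 1$ (the paper writes this as $nz \leq m_Iz < m_In^{\omega-2}z$ with $\omega \in (2,3]$, but the content is identical). Your explicit remark that $m_E=0$ gives $m=m_I$, so the hypothesis $\max\{n,m\}<z$ matches the side conditions of the earlier lemmas, is a small bookkeeping point the paper leaves implicit.
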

\begin{proof}
First we enumerate the cost of each step of the iteration for $m_E=0$ and $n \leq m_I$ if $\max\{n,m\} < z$:
\begin{enumerate}
     \item By Lemma \ref{lemma projection complexity}, generating $P_{\Delta^E}$ is bounded by $\OO(1)$.
     \item By Lemma \ref{lemma complexity generating D}, generating $D$ is bounded by $\OO(nz)$.
     \item By Lemma \ref{lemma complexity generating L}, generating $\{L^k\}_{k=1}^z$ for $n \leq m_I$ is bounded by $\OO(m_In^{\omega -2}z)$.
     \item By Lemma \ref{lemma complexity sample pick}, generating all new $x_{t,j+1}^k$ is bounded by $\OO(zn)$.
 \end{enumerate}
 
By hypothesis, $0<n\leq m_I$. Then, $nz \leq m_Iz < m_In^{\omega -2}z$, because $\omega \in (2,3]$. Therefore, $\OO(1) \subseteq \OO(nz) \subseteq \OO(m_In^{\omega -2}z)$, where the first term is the complexity of finding the projection matrix (omitted for $m_E=0$), the second one bounds generating $D$ and sampling new points, and the third one is the asymptotic cost of finding all line sets $\{L^k\}_{k=1}^z$.
\end{proof}

\begin{lemma}\label{lemma me=0 m<n}
Assume $m_E = 0$, $\max\{n,m\} < z$, and $n > m_I$. Then, the \textit{cost per iteration} of MHAR is $\OO(nm_I^{\omega -2}z)$, which is the number of operations needed for finding all line sets $\{L^k\}_{k=1}^z$.
\end{lemma}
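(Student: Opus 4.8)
The plan is to mirror the argument used for Lemma \ref{lemma me=0 n<m}, adapting only the branch that changes under the hypothesis $n > m_I$. First I would record the asymptotic cost of each of the four operations comprising one iteration of the main loop of Algorithm \ref{alg:mhar}, invoking the component lemmas already established, and then identify the single dominating term. Since $m_E=0$ and $n>m_I$, observe that $m=m_I$ and hence $\max\{n,m\}=n<z$, so the padding hypothesis $\max\{m_I,n\}\le z$ is in force throughout and every component lemma applies.

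Collecting the per-step costs: by Lemma \ref{lemma projection complexity} the projection-matrix step is $\OO(1)$ (it is skipped when $m_E=0$); by Lemma \ref{lemma complexity generating D} generating $D$ costs $\OO(nz)$ since $m_E=0$; by Lemma \ref{lemma complexity sample pick} drawing all $z$ new points $x_{t,j+1}^k$ costs $\OO(nz)$; and, crucially, by Lemma \ref{lemma complexity generating L} finding all line sets $\{L^k\}_{k=1}^z$ now falls in the $n>m_I$ branch, yielding $\OO(m_I^{\omega-2}nz)$ rather than the $\OO(m_In^{\omega-2}z)$ that appeared in the previous lemma. The cost per iteration is then the largest of these four quantities.

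The remaining step, which is the crux, is to show that $\OO(m_I^{\omega-2}nz)$ absorbs the other terms. Here the argument departs from Lemma \ref{lemma me=0 n<m}: in that lemma one used $n\le m_I$ to write $nz\le m_Iz\le m_In^{\omega-2}z$, but that chain is unavailable now. Instead I would appeal directly to $\omega\in(2,3]$, so that $\omega-2\ge 0$ and, for $m_I\ge 1$, $m_I^{\omega-2}\ge 1$; multiplying through by $nz$ gives $nz\le m_I^{\omega-2}nz$. Consequently $\OO(1)\subseteq\OO(nz)\subseteq\OO(m_I^{\omega-2}nz)$, the line-set computation dominates, and the cost per iteration is $\OO(nm_I^{\omega-2}z)$. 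The main obstacle is precisely this comparison: one must verify that the inequality $m_I^{\omega-2}\ge 1$ relies only on $m_I\ge 1$ (guaranteed since a bounded, non-empty polytope has at least one inequality constraint) together with $\omega\ge 2$, and does not secretly depend on any relation between $m_I$ and $n$ — indeed it is exactly the failure of the old $n\le m_I$ bound that forces this alternative route.
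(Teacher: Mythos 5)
Your proposal is correct and follows essentially the same route as the paper's own proof: both reuse the step costs from Lemma \ref{lemma me=0 n<m}, swap in the $n>m_I$ branch of Lemma \ref{lemma complexity generating L}, and conclude via $\OO(1)\subseteq\OO(nz)\subseteq\OO(nm_I^{\omega-2}z)$. Your explicit justification that $m_I^{\omega-2}\ge 1$ (from $m_I\ge 1$ and $\omega\ge 2$) is precisely the step the paper compresses into ``by hypothesis, $0<m_I$\ldots thus $nz < nm_I^{\omega-2}z$,'' so you have merely made the paper's own argument slightly more explicit.
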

\begin{proof}
As in the proof of Lemma \ref{lemma me=0 n<m}, the complexity of the projection matrix, generating $D$, and sampling all new $x_{t,j+1}^k$ points is the same, given by $m_E=0$ and $n > m_I$. Hence, the only change is provided by Lemma \ref{lemma complexity generating L}, in which the cost of finding all line sets $\{L^k\}_{k=1}^z$ for  $n > m_I$ is $\OO(nm_I^{\omega -2}z)$.
By hypothesis, $0<m_I$ and $\max\{n,m\} < z$, thus $nz < nm_I^{\omega -2}z$. Therefore, $\OO(1) \subseteq \OO(nz) \subseteq \OO(nm_I^{\omega -2}z)$, where the third term is the cost of finding all line sets $\{L^k\}_{k=1}^z$.
\end{proof}

\begin{corollary}
\label{coro me=0}
Assume $m_E = 0$ and $\max\{n,m\} < z$. Then, the \textit{cost per iteration} of MHAR is bounded by the cost of finding all line sets $\{L^k\}_{k=1}^z$.
\end{corollary}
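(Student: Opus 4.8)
The plan is to obtain this corollary directly from the two preceding lemmas by exhausting the possible orderings of $n$ and $m_I$. Since $m_E = 0$ is fixed by hypothesis, and since $m = m_I$ in this setting, the only remaining dichotomy that affects the asymptotic complexity of the main loop is whether $n \leq m_I$ or $n > m_I$. These two cases are mutually exclusive and together cover all possibilities, so a single case split suffices to settle the claim.

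First I would invoke Lemma \ref{lemma me=0 n<m}: under the assumptions $m_E = 0$, $\max\{n,m\} < z$, and $n \leq m_I$, the cost per iteration of MHAR is $\OO(m_I n^{\omega-2} z)$, which that lemma identifies as precisely the cost of finding all line sets $\{L^k\}_{k=1}^z$. Second, for the complementary case $n > m_I$, I would invoke Lemma \ref{lemma me=0 m<n}, which establishes that the cost per iteration is $\OO(n m_I^{\omega-2} z)$, again equal to the cost of finding all line sets. In both lemmas the dominating term is exactly the line-set computation, with the projection step (trivial when $m_E=0$), the generation of $D$, and the sampling of the new points $x_{t,j+1}^k$ all subsumed into that bound, as verified in the respective proofs via the chain $\OO(1) \subseteq \OO(nz) \subseteq$ (line-set cost).

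Since the two cases partition the entire parameter regime consistent with the hypotheses, combining them yields the stated conclusion that, whenever $m_E = 0$ and $\max\{n,m\} < z$, the cost per iteration of MHAR is bounded by the cost of finding all line sets $\{L^k\}_{k=1}^z$. I do not expect any genuine obstacle here: the corollary is a bookkeeping consolidation of Lemmas \ref{lemma me=0 n<m} and \ref{lemma me=0 m<n}, and the only point needing minor care is confirming that the two sub-cases are genuinely exhaustive and that in each the line-set computation is indeed the asymptotically dominant step.
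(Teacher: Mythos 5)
Your proof is correct and follows exactly the paper's route: the paper likewise derives this corollary directly from Lemmas \ref{lemma me=0 n<m} and \ref{lemma me=0 m<n}, which together cover the exhaustive case split $n \leq m_I$ versus $n > m_I$ and identify the line-set computation as the dominant cost in each.
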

\begin{proof}
The proof follows from Lemmas \ref{lemma me=0 n<m} and \ref{lemma me=0 m<n}. 
\end{proof}

\noindent We proceed to finding the cost per iteration for the non-full dimensional case $m_E>0$.
\begin{lemma}\label{lma1}
Assume $m_E < n$ and $(m,n)<z$. Then, the cost of calculating the projection matrix $P_{\Delta^E}$ is bounded by the cost of generating $D$.
\end{lemma}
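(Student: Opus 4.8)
The plan is to compare the two costs already computed earlier in the excerpt and show that the first is asymptotically dominated by the second. First I would invoke Lemma~\ref{lemma projection complexity}, which under the standing hypothesis $m_E < n$ gives the cost of computing $P_{\Delta^E}$ as $\OO(m_E^{\omega-2}n^2)$. Then, since we are in the non-full dimensional regime $m_E>0$ and the padding satisfies $\max\{m_I,n\}\le z$ (which follows from $(m,n)<z$, because $m_I < m < z$ and $n<z$), I would invoke Lemma~\ref{lemma complexity generating D} to record the cost of generating $D$ as $\OO(n^{\omega-1}z)$. With both costs in hand, the claim reduces to proving the containment $\OO(m_E^{\omega-2}n^2)\subseteq\OO(n^{\omega-1}z)$.

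Next I would establish this containment by a short chain of inequalities. Because $\omega\in(2,3]$, the exponent $\omega-2$ is positive, so the map $x\mapsto x^{\omega-2}$ is increasing on the positive reals; combined with the hypothesis $m_E<n$ this yields $m_E^{\omega-2}\le n^{\omega-2}$, and multiplying through by $n^2$ gives $m_E^{\omega-2}n^2\le n^{\omega}$. Finally, the hypothesis $n<z$ (again extracted from $(m,n)<z$) gives $n^{\omega}=n^{\omega-1}\cdot n< n^{\omega-1}z$. Chaining the two bounds produces $m_E^{\omega-2}n^2< n^{\omega-1}z$, which is exactly the statement that the cost of $P_{\Delta^E}$ is bounded by the cost of generating $D$.

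This computation carries essentially no difficulty; the only point that genuinely needs care is the exponent manipulation, namely remembering that $\omega-2>0$ turns the constraint $m_E<n$ into $m_E^{\omega-2}\le n^{\omega-2}$ rather than the reverse inequality. For completeness I would also state at the outset that $(m,n)<z$ simultaneously delivers $n<z$ and the hypothesis $\max\{m_I,n\}\le z$ demanded by Lemma~\ref{lemma complexity generating D}, so that both cited costs are legitimately in force before the comparison is made.
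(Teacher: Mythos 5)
Your proposal is correct and follows essentially the same argument as the paper: both invoke Lemma~\ref{lemma projection complexity} and Lemma~\ref{lemma complexity generating D} for the two costs, then chain the inequalities $m_E^{\omega-2}n^2 < n^{\omega} < n^{\omega-1}z$ using $m_E<n$ and $n<z$ to conclude $\OO(m_E^{\omega-2}n^2)\subseteq\OO(n^{\omega-1}z)$. Your added remarks (that $\omega-2>0$ makes the exponentiation monotone, and that $(m,n)<z$ supplies the hypothesis $\max\{m_I,n\}\le z$ needed by Lemma~\ref{lemma complexity generating D}) are minor clarifications of steps the paper leaves implicit.
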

\begin{proof}
By hypothesis $m_E < n $, implying that $m_E^{\omega-2}n^2 < n^{\omega-2}n^2 = n^{\omega}$. Because $n<z$, $n^{\omega} = n^{\omega-1}n  <n^{\omega-1}z $. Combining both inequalities yields $m_E^{\omega-2}n^2 < n^{\omega}< n^{\omega-1}z$. Therefore, $\OO(m_E^{\omega-2}n^2) \subseteq \OO(n^{\omega-1}z)$, where the first term is the complexity of computing $P_{\Delta^E}$ (by Lemma  \ref{lemma projection complexity}), and the second term is the complexity of projecting $H$ in order to obtain $D$ (by Lemma \ref{lemma complexity generating D}).
\end{proof}

\begin{lemma}\label{lemma 0<me n<m}
Assume $m_E>0$, $\max\{n,m\} < z$, and $n \leq m_I$. Then, the \textit{cost per iteration} of MHAR is $\OO(m_In^{\omega -2}z)$, which is the number of operations needed for finding all line sets $\{L^k\}_{k=1}^z$.
\end{lemma}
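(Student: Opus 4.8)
The plan is to reuse the template of Lemmas \ref{lemma me=0 n<m} and \ref{lemma me=0 m<n}: enumerate the four operations of the main loop, bound each with a complexity lemma already proved, and identify the line-set construction as the dominant term. The only new feature relative to the full-dimensional case is that $m_E>0$ now introduces a nontrivial projection cost and a more expensive generation of $D$.

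First I would record the per-step costs under the hypotheses $m_E>0$, $\max\{n,m\}<z$, and $n\leq m_I$ (recalling that $m_E<n$ throughout the non-full-dimensional analysis, as in Lemma \ref{lemma projection complexity}). By Lemma \ref{lemma projection complexity}, computing $P_{\Delta^E}$ is $\OO(m_E^{\omega-2}n^2)$; by Lemma \ref{lemma complexity generating D}, generating $D$ is $\OO(n^{\omega-1}z)$; by Lemma \ref{lemma complexity generating L} with $n\leq m_I$, finding all $\{L^k\}_{k=1}^z$ is $\OO(m_In^{\omega-2}z)$; and by Lemma \ref{lemma complexity sample pick}, sampling all new points is $\OO(zn)$.

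Next I would chain these into a single dominating term. Since $m_E<n$ and $(m,n)<z$, Lemma \ref{lma1} applies and gives $\OO(m_E^{\omega-2}n^2)\subseteq\OO(n^{\omega-1}z)$, absorbing the projection cost into the cost of generating $D$. It then remains only to show that generating $D$ is itself dominated by the line-set construction.

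The main (and essentially only) nontrivial step is the inequality $n^{\omega-1}z\leq m_In^{\omega-2}z$. Writing $n^{\omega-1}=n\cdot n^{\omega-2}$ and using $n\leq m_I$ reduces this to $n\cdot n^{\omega-2}z\leq m_I\,n^{\omega-2}z$, which is immediate; since $\omega\geq 2$ also yields $zn\leq n^{\omega-1}z$, the sampling term is absorbed as well. Chaining the containments
\[
\OO(m_E^{\omega-2}n^2)\subseteq\OO(n^{\omega-1}z)\subseteq\OO(m_In^{\omega-2}z)
\]
then establishes that the cost per iteration equals the line-set cost $\OO(m_In^{\omega-2}z)$, as claimed.
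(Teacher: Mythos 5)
Your proposal is correct and follows essentially the same route as the paper's own proof: the same four per-step bounds from Lemmas \ref{lemma projection complexity}, \ref{lemma complexity generating D}, \ref{lemma complexity generating L}, and \ref{lemma complexity sample pick}, the same use of Lemma \ref{lma1} to absorb the projection cost, and the same chain $\OO(m_E^{\omega-2}n^2)\subseteq\OO(n^{\omega-1}z)\subseteq\OO(m_In^{\omega-2}z)$ driven by $n\leq m_I$. The only cosmetic difference is that you absorb the sampling term $\OO(zn)$ into $\OO(n^{\omega-1}z)$ first (via $\omega\geq 2$) whereas the paper bounds it directly by $\OO(m_In^{\omega-2}z)$; your explicit reminder that $m_E<n$ is needed for Lemma \ref{lma1} is a welcome clarification of a hypothesis the paper leaves implicit.
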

\begin{proof}
First, we enumerate the cost of each step of the iteration for $m_E>0$, $n \leq m_I$, and $\max\{n,m\} < z$:
\begin{enumerate}
     \item By Lemma \ref{lemma projection complexity}, generating $P_{\Delta^E}$ is bounded by $\OO(m_E^{\omega-2}n^2)$.
     \item By Lemma \ref{lemma complexity generating D}, generating $D$ is bounded by $\OO(n^{\omega -1}z)$.
     \item By Lemma \ref{lemma complexity generating L}, generating $\{L^k\}_{k=1}^z$ for $n \leq m_I$ is bounded by$\OO(m_In^{\omega -2}z)$.
     \item By  Lemma \ref{lemma complexity sample pick}, generating all new $x_{t,j+1}^k$ is bounded by $\OO(zn)$.
 \end{enumerate}
 
Using Lemma \ref{lma1}, the Big-O term for finding $P_{\Delta^E}$ (step 1) is bounded by the term of generating $D$ (step 2). Because $n<m_I$, $n^{\omega-1}z=n^{\omega-2}nz<n^{\omega-2}m_Iz$. Therefore, $\OO(m_E^{\omega-2}n^2)\subseteq \OO(n^{\omega -1}z) \subseteq \OO(m_In^{\omega -2}z)$, which are the respective costs of steps 1, 2, and 3. Furthermore, $nz \leq n^{\omega -2}m_Iz$, implying that step 4 is also bounded by step 3 in terms of complexity. This implies that all the operations above are bounded by the term $\OO(m_In^{\omega -2}z)$, which is the asymptotic complexity of finding all line sets $\{L^k\}_{k=1}^z$.
\end{proof}

\begin{lemma}\label{lemma 0<me m<n}
Assume $m_E>0$, $\max\{n,m\} < z$, and $n>m_I$. Then, the \textit{cost per iteration} of MHAR is $\OO(nm_I^{\omega -2}z)$, which is the number of operations needed for generating $D$.
\end{lemma}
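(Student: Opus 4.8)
The plan is to follow the template of the four preceding lemmas: tabulate the cost of each loop operation under the hypotheses $m_E>0$, $n>m_I$, and $\max\{n,m\}<z$, and then show that generating $D$ dominates the remaining three operations, so that the cost per iteration is governed by the number of operations needed to generate $D$. Every per-step cost is imported verbatim from earlier results: $P_{\Delta^E}$ costs $\OO(m_E^{\omega-2}n^2)$ (Lemma \ref{lemma projection complexity}); generating $D=P_{\Delta^E}H$ costs $\OO(n^{\omega-1}z)$ in the case $m_E>0$ (Lemma \ref{lemma complexity generating D}); finding the line sets in the regime $n>m_I$ costs $\OO(nm_I^{\omega-2}z)$ (Lemma \ref{lemma complexity generating L}); and sampling the $z$ new points costs $\OO(nz)$ (Lemma \ref{lemma complexity sample pick}).

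First I would dispose of the two cheap steps. Sampling is absorbed into generating $D$ because $n^{\omega-1}\ge n$ for $\omega\ge 2$, so $nz\subseteq n^{\omega-1}z$. The projection-matrix cost is absorbed into generating $D$ directly by Lemma \ref{lma1}, whose hypotheses $m_E<n$ and $n<z$ hold here. This reduces the problem to the single comparison between generating $D$ and finding the line sets.

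I expect this comparison to be the main obstacle, since it is exactly the point at which the bottleneck reverses relative to the companion case $n\le m_I$ of Lemma \ref{lemma 0<me n<m}, where the line sets were dominant. Factoring both bounds through the common $nz$, generating $D$ carries the exponent $n^{\omega-2}$ whereas the line sets carry $m_I^{\omega-2}$; from $\omega-2>0$ and $n>m_I$ I would conclude $m_I^{\omega-2}<n^{\omega-2}$ and hence $nm_I^{\omega-2}z\subseteq n^{\omega-1}z$. The line-set cost, dominant when $n\le m_I$, is therefore itself absorbed into the cost of generating $D$, producing the chain $\OO(1)\subseteq\OO(nz)\subseteq\OO(nm_I^{\omega-2}z)\subseteq\OO(n^{\omega-1}z)$.

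Collecting these bounds, every loop operation is dominated by the generation of $D$, so the cost per iteration equals the number of operations needed to generate $D$, which is the content the lemma attributes to this regime; by the $m_E>0$ case of Lemma \ref{lemma complexity generating D} this dominant cost is $\OO(n^{\omega-1}z)$, matching the corresponding leaf of Figure \ref{tree1}. Apart from handling the exponent comparison carefully for $\omega\in(2,3]$, no further difficulty is anticipated, as each per-step cost is delivered directly by the four preceding lemmas.
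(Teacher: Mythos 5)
Your proof mirrors the paper's own argument step for step: the same four per-step costs imported from Lemmas \ref{lemma projection complexity}, \ref{lemma complexity generating D}, \ref{lemma complexity generating L}, and \ref{lemma complexity sample pick}, the same appeal to Lemma \ref{lma1} to absorb the projection-matrix cost into generating $D$, and the same key inequality $m_I^{\omega-2}nz < n^{\omega-1}z$ (from $n>m_I$, $\omega>2$), yielding exactly the chain $\OO(m_E^{\omega-2}n^2)\subseteq\OO(nm_I^{\omega-2}z)\subseteq\OO(n^{\omega-1}z)$ that the paper's proof derives. One point deserves flagging: your final bound $\OO(n^{\omega-1}z)$ differs from the coefficient $\OO(nm_I^{\omega-2}z)$ printed in the lemma statement, but yours is the one the argument actually supports --- the statement's own clause ``which is the number of operations needed for generating $D$'' and the corresponding leaf of Figure \ref{tree1} both give $\OO(n^{\omega-1}z)$, and generating $D$ alone costs $\OO(n^{\omega-1}z)$, which is not contained in $\OO(nm_I^{\omega-2}z)$ when $m_I<n$. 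The printed coefficient (repeated in the closing sentence of the paper's proof and in the fourth case of Theorem \ref{thmmhar1}) is evidently a typo, as it also contradicts the abstract's $\SO(n^{\omega+2})$ cost per sample for the non-full-dimensional regime; so your resolution of the discrepancy is the correct one, not a gap in your proof.
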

\begin{proof}
As in the proof of Lemma \ref{lemma 0<me n<m}, the cost of the projection matrix, generating $D$, and sampling all new $x_{t,j+1}^k$ points is the same, given by $m_E>0$ and $n > m_I$. Hence, the only change is provided by Lemma \ref{lemma complexity generating L}, in which the cost of finding all line sets $\{L^k\}_{k=1}^z$ for  $n > m_I$ is $\OO(nm_I^{\omega -2}z)$.

By Lemma \ref{lma1}, the Big-O term for finding $P_{\Delta^E}$ is bounded by the term of generating $D$. Because $n>m_I$, $m_I^{\omega-2}nz < n^{\omega-2}nz=n^{\omega-1}z$. Therefore, $\OO(m_E^{\omega-2}n^2) \subseteq \OO(nm_I^{\omega -2}z) \subseteq \OO(n^{\omega -1}z)$, which are the respective costs of the projection matrix, finding all line sets, and generating $D$. Furthermore, $nz \leq n^{\omega -2}nz=n^{\omega -1}z$, implying that the cost of sampling all new $x_{t,j+1}^k$ is also bounded by the cost of generating $D$. This implies that all the operations above are bounded by $\OO(nm_I^{\omega -2}z)$.
\end{proof}

We can now proceed to the main results of the paper, given in Theorem \ref{thmmhar1}.

\begin{theorem}\label{thmmhar1}
If $\max\{n,m\} < z$, then after proper pre-processing and a \textit{warm start}, the \textit{cost per sample} of MHAR is
\begin{equation}
    \begin{cases}
    \ \ \SO(m_In^{\omega+1}), \ \ if \ m_E = 0 \ and  \ n \leq m_I\\
    \ \ \SO(n^{\omega+2}),  \ \ \ \ \ \ if \ m_E = 0 \ and  \ n > m_I\\
    \ \ \SO(m_In^{\omega+1}), \ \ if \ m_E > 0 \ and  \ n \leq m_I\\
    \ \ \SO(m_I^{\omega-2}n^4), \  \ if \ m_E > 0 \ and  \ n > m_I.
    \end{cases}
\end{equation}
\end{theorem}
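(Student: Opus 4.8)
The plan is to assemble the cost per sample from two ingredients already in hand: the mixing time after a warm start and the per-iteration cost established in the four preceding lemmas. Following the principle used in Theorem~\ref{th_cost_per_sample_har_nf} and in \cite{walks}, the cost per sample is the product of the mixing time and the cost per iteration. The one subtlety specific to MHAR is that a single iteration advances all $z$ walks simultaneously, so after $\varphi(n)$ iterations the algorithm harvests $z$ independent samples (as reflected by the update $t\gets t+z$ in Algorithm~\ref{alg:mhar}) rather than a single one. I would therefore write
\[
\text{cost per sample} \;=\; \frac{\varphi(n)\cdot(\text{cost per iteration})}{z},
\]
and recall that after proper pre-processing and a warm start the per-walk mixing time is $\varphi(n)=\SO(n^3)$, inherited unchanged from HAR because each walk is oblivious of the others.

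Next I would split into the same four cases used throughout the section and substitute the per-iteration bounds from Lemmas~\ref{lemma me=0 n<m}, \ref{lemma me=0 m<n}, \ref{lemma 0<me n<m}, and \ref{lemma 0<me m<n}. For the two cases with $n\le m_I$ (Lemmas~\ref{lemma me=0 n<m} and \ref{lemma 0<me n<m}) the cost per iteration is $\OO(m_I n^{\omega-2}z)$; multiplying by $\SO(n^3)$ and dividing by $z$ cancels the padding factor and gives $\SO(m_I n^{\omega+1})$, which is the first and third branches. For the two cases with $n>m_I$ (Lemmas~\ref{lemma me=0 m<n} and \ref{lemma 0<me m<n}) the cost per iteration is $\OO(m_I^{\omega-2}n\,z)$; the same multiplication and cancellation yield $\SO(m_I^{\omega-2}n^4)$, which is exactly the fourth branch.

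The only remaining step is to re-express the full-dimensional case $m_E=0,\ n>m_I$ in the form stated. Here I would invoke the case hypothesis $m_I<n$ together with $\omega>2$ to bound $m_I^{\omega-2}<n^{\omega-2}$, so that $m_I^{\omega-2}n^4\le n^{\omega-2}n^4=n^{\omega+2}$, giving $\SO(n^{\omega+2})$ for the second branch. I do not anticipate a genuine obstacle, since the analytic work is already contained in the per-iteration lemmas and what remains is bookkeeping. The single point that deserves careful statement is the division by $z$: padding does not merely lower the per-iteration cost of one walk but simultaneously produces $z$ usable samples, so the factor $z$ present in every per-iteration bound cancels exactly, leaving a cost per sample that is independent of the padding hyper-parameter.
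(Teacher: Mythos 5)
Your overall strategy is the same as the paper's: take the per-iteration bounds from Lemmas \ref{lemma me=0 n<m}, \ref{lemma me=0 m<n}, \ref{lemma 0<me n<m}, and \ref{lemma 0<me m<n}, multiply by the per-walk mixing time $\SO(n^3)$ after a warm start, and divide by $z$ because each iteration advances $z$ independent walks. The cancellation of the padding factor is exactly how the paper concludes, and branches 1 and 3 are handled identically in both proofs.

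Where you diverge is in the two $n>m_I$ branches, and the divergence is instructive. For branch 2 ($m_E=0$, $n>m_I$) the paper's proof displays the per-iteration cost as $\OO(n^{\omega-1}z)$, which misquotes its own Lemma \ref{lemma me=0 m<n} (that lemma gives $\OO(m_I^{\omega-2}nz)$); you instead use the lemma faithfully, obtain $\SO(m_I^{\omega-2}n^4)$, and relax it to the stated $\SO(n^{\omega+2})$ via $m_I<n$ and $\omega>2$. That is a legitimate, in fact tighter, derivation of branch 2. The real problem is branch 4. You take the statement of Lemma \ref{lemma 0<me m<n} at face value, but that statement contradicts its own proof: the proof establishes the chain $\OO(m_E^{\omega-2}n^2)\subseteq\OO(m_I^{\omega-2}nz)\subseteq\OO(n^{\omega-1}z)$, so when $m_E>0$ and $n>m_I$ the dominant per-iteration cost is generating $D=P_{\Delta^E}H$ at $\OO(n^{\omega-1}z)$ (Lemma \ref{lemma complexity generating D}), not the line-set cost $\OO(m_I^{\omega-2}nz)$. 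Multiplying by $n^3/z$ gives a true cost per sample of $\SO(n^{\omega+2})$ for that case, which strictly exceeds the claimed $\SO(m_I^{\omega-2}n^4)$ whenever $m_I<n$. In other words, the theorem's second and fourth branches appear to be swapped --- the abstract agrees with the swapped assignment, attributing $m_I^{\omega-2}n^4$ to the full-dimensional case and $n^{\omega+2}$ to the non-full-dimensional one --- and no argument, the paper's or yours, can establish branch 4 as written. Your branches 1--3 are sound, but your branch-4 step silently inherits a genuine error from the faulty lemma statement; a careful write-up should flag this inconsistency and prove the corrected bound $\SO(n^{\omega+2})$ instead.
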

\begin{proof}
Lemmas \ref{lemma me=0 n<m}, \ref{lemma me=0 m<n}, \ref{lemma 0<me n<m}, and \ref{lemma 0<me m<n} gave the cost per iteration of MHAR for all four cases:

\begin{equation}\label{Thproof}
    \begin{cases}
    \ \ \OO(m_In^{\omega-2}z), \ \ if \ m_E = 0 \ and  \ n \leq m_I\\
    \ \ \OO(n^{\omega-1}z),  \ \ \ \ \ \ if \ m_E = 0 \ and  \ n > m_I\\
    \ \ \OO(m_In^{\omega-2}z), \ \ if \ m_E > 0 \ and  \ n \leq m_I\\
    \ \ \OO(m_I^{\omega-2}nz), \  \ if \ m_E > 0 \ and  \ n > m_I.
    \end{cases} 
\end{equation}

It was stated that each walk from the "padding" is independent about the other ones after a warm-start. Then,
each individual walk has a mixing time of $\SO(n^3)$. Then it suffices to apply the rule for Big-O products between the cost per iteration and the mixing time, and divide the coefficient by the padding parameter $z$, which is the number of points obtained at each iteration. Hence, multiplying each case in Equation (\ref{Thproof}) by $\frac{n^3}{z}$ obtains the desired result. 
\end{proof}

Figure \ref{tree2} graphically depicts the results of the theorem.
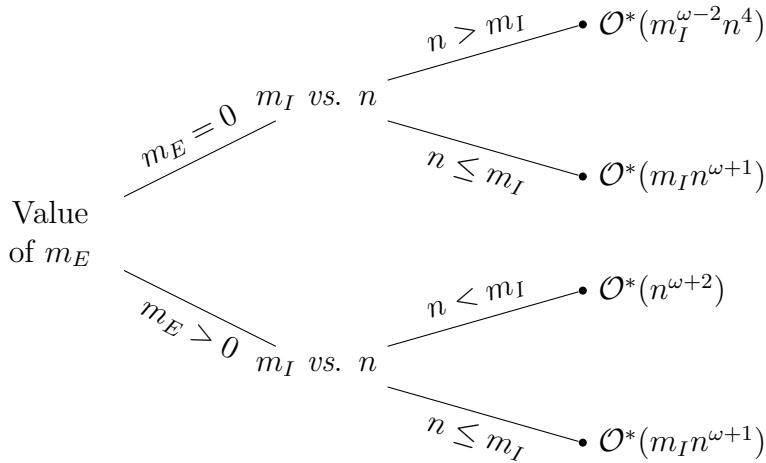
\begin{figure}[h!]
\centering
\resizebox{.7\textwidth}{!}{
\begin{tikzpicture}[grow=right, sloped]
\node[bag] {Value of $m_E$}
    child {
        node[bag] {$m_I$ \textit{vs}. $n$}        
            child {
                node[end, label=right:
                    {$\SO(m_In^{\omega + 1})$}] {}
                edge from parent
                node[above] {}
                node[below]  {$n \leq m_I$}
            }
            child {
                node[end, label=right:
                    {$\SO(n^{\omega +2})$}] {}
                edge from parent
                node[above] {$n < m_I $}
                node[below]  {}
            }
            edge from parent 
            node[above] {}
            node[below]  {$m_E>0$}
    }
    child {
        node[bag] {$m_I$ \textit{vs}. $n$}        
        child {
                node[end, label=right:
                    {$\SO(m_In^{\omega +1})$}] {}
                edge from parent
                node[above] {}
                node[below]  {$n \leq m_I$}
            }
            child {
                node[end, label=right:
                    {$\SO(m_I^{\omega -2}n^4)$}] {}
                edge from parent
                node[above] {$n > m_I $}
                node[below]  {}
            }
        edge from parent         
            node[above] {$m_E=0$}
            node[below] {}
    };
\end{tikzpicture}
}
\caption{Asymptotic behavior of the cost per sample of MHAR after a warm start.}
\label{tree2}
\end{figure}
     
Theorem \ref{thmmhar1} characterizes the cost per sample of MHAR for all parameter values. The theorem shows that MHAR is always at least as efficient as HAR, and more efficient for $\omega \in (2,3)$. Intuitively this is caused by ``padding,'' which permits matrix-to-matrix multiplications instead of isolated matrix-to-vector operations when finding the line sets $L$ or the directions $D$. Furthermore, this approach allows efficient cache usage and state-of-the-art GPU matrix multiplication algorithms.

\section{MHAR Complexity Benchmarks}\label{S:5}
This section benchmarks the asymptotic behavior of MHAR against that for seven state-of-the-art algorithms. Some of these algorithms cover additional convex figures, like spheres or cones. However, we restrict our focus on polytopes because they are the target of MHAR. For in-depth analysis of each algorithm, see \cite{walks}. We prioritize the full-dimensional case ($m=m_I, m_E=0$) because few algorithms are designed for the non-full dimensional scenario and their analysis is outside our scope. Table \ref{table_complex} is adapted from \cite{walks} and includes the notation established in \cite{har_tommi} and \cite{montiel_jd}. The authors of RHCM \cite{RHMC}, John's walk \cite{john_walk}, Vaidya walk, and John walk omitted $m < n$, which is also outside of our scope. Note that John's walk and John walk are different algorithms.

In \S \ref{S:4} we showed that the MHAR has lower cost per sample than the HAR for efficient matrix multiplication algorithms. Furthermore, because the Ball walk \cite{ballwalk} has the same cost per sample as HAR, we can derive the next corollary.
\begin{corollary} \label{cor eff ball}
The \textit{cost per sample} of MHAR is as low as the \textit{cost per sample} of the Ball walk, after a \textit{warm start}, if $\max\{n,m\} < z$. And strictly lower if efficient matrix-to-matrix algorithms are used $\big(\omega \in (2,3)\big)$.
\end{corollary}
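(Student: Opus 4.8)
The plan is to reduce the statement to a short comparison of polynomial exponents, using the cost-per-sample formulas already established. First I would assemble the three relevant quantities. The hypothesis $\max\{n,m\} < z$ is exactly the condition under which Theorem \ref{thmmhar1} applies, and by that theorem the full-dimensional cost per sample of MHAR ($m_E = 0$) is $\SO(m_I n^{\omega+1})$ when $n \leq m_I$. By assumption the Ball walk has the same cost per sample as HAR, which by Theorem \ref{th_cost_per_sample_har_nf} is $\SO(m_I n^4)$ in the full-dimensional case ($m_E = 0$, $m = m_I$). The comparison is therefore between $\SO(m_I n^{\omega+1})$ and $\SO(m_I n^4)$.

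Next I would justify that $n \leq m_I$ is the governing regime. Since the Ball walk is defined for full-dimensional convex bodies, and every bounded full-dimensional polytope requires at least $n+1$ inequalities (its constraint normals must positively span $\R^n$), boundedness forces $m_I > n$. Hence the branch $n > m_I$ of Theorem \ref{thmmhar1} does not arise here, and $\SO(m_I n^{\omega+1})$ is the correct MHAR cost to compare.

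I would then carry out the exponent comparison. Because $\omega \in (2,3]$, we have $\omega + 1 \in (3,4]$, so $n^{\omega+1} \leq n^4$ for every $n \geq 1$; multiplying by the common factor $m_I$ gives $m_I n^{\omega+1} \leq m_I n^4$, i.e. $\SO(m_I n^{\omega+1}) \subseteq \SO(m_I n^4)$. This proves that MHAR's cost per sample is as low as the Ball walk's. For the strict claim, when $\omega \in (2,3)$ the exponent is strictly below $4$, so $n^{\omega+1}/n^4 = n^{\omega-3} \to 0$ as $n \to \infty$, giving $m_I n^{\omega+1} \in o(m_I n^4)$; equality occurs only at the naive value $\omega = 3$. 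Thus the MHAR cost is strictly lower precisely when an efficient matrix-to-matrix routine ($\omega \in (2,3)$) is used.

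The argument is essentially a substitution followed by an elementary exponent inequality, so I do not expect a genuine analytic obstacle. The one point needing care is the case split in Theorem \ref{thmmhar1}: I must rule out the $n > m_I$ branch, and the main work is recognizing that the standing assumption of a bounded, full-dimensional polytope supplies $m_I > n$ and so selects the single branch $\SO(m_I n^{\omega+1})$ that governs the comparison with the Ball walk.
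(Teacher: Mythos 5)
Your proof is correct and follows essentially the same route as the paper, whose entire proof is the one-line remark that the result follows from comparing Theorem \ref{thmmhar1} against the Ball walk's cost per sample $\SO(mn^4)$. Your added details---the exponent comparison $\omega+1\le 4$ with equality only at $\omega=3$, and the observation that boundedness of a full-dimensional polytope forces $m_I>n$ and hence selects the $\SO(m_I n^{\omega+1})$ branch of the theorem---merely make explicit what the paper leaves implicit.
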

\begin{proof}
This follows from comparing Theorem \ref{thmmhar1} against the complexity of the Ball walk.
\end{proof}
The following lemma shows that MHAR has a lower cost per sample than does John's walk.
\begin{lemma} \label{lemma johns walk}
For $\max\{n,m\} < z$, and $n<m$, MHAR has a lower cost per sample than does John's walk after proper pre-processing, \textit{warm start}, and ignoring the logarithmic and error terms.
\end{lemma}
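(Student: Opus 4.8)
The plan is to reduce the statement to a single comparison of cost-per-sample exponents and to exploit the fact that both algorithms carry exactly one matrix-multiplication factor, so that $\omega$ cancels out of the comparison. First I would pin down which branch of Theorem \ref{thmmhar1} is in force. Since \S \ref{S:5} fixes the benchmark in the full-dimensional regime ($m=m_I$, $m_E=0$) and the lemma assumes $n<m$, we are in the case $m_E=0$ with $n\le m_I$, and Theorem \ref{thmmhar1} gives an MHAR cost per sample of $\SO(m_I\,n^{\omega+1})=\SO(m\,n^{\omega+1})$. This is the figure I would carry through the rest of the argument.

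Next I would read the cost per sample of John's walk \cite{john_walk} directly from Table \ref{table_complex} (adapted from \cite{walks}), taking care to use the \emph{John's walk} row rather than the distinct \emph{John walk} entry. As with the other interior-point walks collected there, its warm-start mixing time is a (super-quadratic) polynomial in $n$ alone up to factors logarithmic in $m$, while its per-iteration work is linear in the constraint count $m$ and consists of a single matrix-multiplication-type step. Writing its mixing exponent in $n$ as $p$, its cost per sample therefore has the form $\SO(m\,n^{\omega-1+p})$, matching the $m$-linearity of the MHAR bound and containing, like the MHAR bound, exactly one factor of $\omega$ in its $n$-exponent.

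The decisive observation is then that the shared $m$ and the single $\omega$ cancel in the comparison: MHAR's $n$-exponent is $\omega+1$ and John's walk's is $\omega-1+p$, so $\SO(m\,n^{\omega+1})\subseteq\OO(m\,n^{\omega-1+p})$ with diverging ratio is equivalent to the $\omega$-free inequality $2<p$. Since the mixing time of John's walk recorded in the table is super-quadratic in $n$, this holds, and the claim follows once logarithmic and error terms are suppressed. The hard part will be making this cancellation legitimate rather than assumed: I must confirm from Table \ref{table_complex} that John's walk's per-iteration cost genuinely contributes a single $n^{\omega-1}$ factor together with a matching (at least) linear dependence on $m$, so that the only surviving discrepancy lives in the mixing exponents. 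If instead its per-step algebra were tabulated with a different power of $\omega$, I would fall back on comparing the raw $n$-exponents and verifying that $\omega+1$ remains strictly below John's walk's exponent uniformly for $\omega\in(2,3]$, checking the endpoints $\omega=2,3$ and extending by the monotonicity of $\omega\mapsto\omega+1$. Once the surviving exponent inequality is secured, the lemma is established, with the advantage attributable entirely to MHAR's smaller power of $n$.
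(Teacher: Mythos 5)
Your identification of the relevant branch of Theorem \ref{thmmhar1} is correct ($m_E=0$, $n\le m_I$, giving $\SO(mn^{\omega+1})$), but your primary argument rests on a factual mischaracterization of John's walk. You claim its per-iteration work is ``linear in the constraint count $m$ and consists of a single matrix-multiplication-type step,'' so that its cost per sample takes the form $\SO(m\,n^{\omega-1+p})$ and the $\omega$ factors cancel in the comparison. Table \ref{table_complex} says otherwise: John's walk has mixing time $n^{7}$ and cost per iteration $mn^{4}+n^{8}$ --- there is no $\omega$ factor at all in its per-iteration cost, and it is not purely $m$-linear (the additive $n^{8}$ term survives). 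Its cost per sample is $\OO(mn^{11}+n^{15})$, which cannot be written as $m\,n^{\omega-1+p}$ for any $p$. The structure you describe (single $mn^{\omega-1}$ multiplication per step) is that of the Dikin, Vaidya, and \emph{John walk} rows --- precisely the distinct algorithm you warned yourself not to confuse with John's walk, and the one treated in the paper's subsequent Lemma \ref{lemma mhar john}. So the ``decisive observation'' of $\omega$-cancellation, and the reduction to the inequality $2<p$, does not legitimately apply here; the verification step you flagged would fail if carried out.

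What saves you is your fallback, which is exactly the paper's proof: compare raw exponents directly. MHAR's cost per sample is $\SO(mn^{\omega+1})$ with $\omega+1\le 4$, John's walk's is $\OO(mn^{11}+n^{15})$, and since $4<11$ we have $mn^{\omega+1}\in\OO(mn^{11}+n^{15})$ uniformly for $\omega\in(2,3]$, with diverging ratio. The paper does nothing more than state these two bounds and this containment. So your proposal reaches the right conclusion, but only via the contingency route; the mechanism you present as the heart of the argument is the wrong one for this lemma, and would only become correct if transplanted to the comparison with the John walk in Lemma \ref{lemma mhar john}.
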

\begin{proof}
Given proper pre-processing, $n \ll m$, and $\max\{n,m\} < z$, then MHAR's cost per sample is $\SO(mn^{\omega + 1})$, and that for John's walk is $\OO(mn^{11} + n^{15})$. Note that $mn^{\omega + 1}\in \OO(mn^{11} + n^{15})$. Therefore, when ignoring the logarithmic and error terms, MHAR has a lower cost per sample.
\end{proof}

\begin{table*}[h!]
\setstretch{1.5} 
\caption{Asymptotic behavior of random walks}
\label{table_complex}
\vspace{.2cm}
\centering
\begin{tabular}{ |p{4.9cm}||p{2.7cm}||p{2.7cm}||p{2cm}|}
 \hline
 \multicolumn{4}{|c|}{Random walks behaviour} \\
 \hline
 \textbf{Walk} &\textbf{Mixing       time} & \textbf{Cost per $\quad \ $ iteration} & \textbf{Cost per $\ $ sample}\\
  \hline
 MHAR with $n > m$  & $n^{3}$    &$m^{\omega -2}nz$ & $m^{\omega -2}n^4$\\
 \hline
  MHAR with $ n \leq m$  & $n^{3}$    &$mn^{\omega -2}z$ & $mn^{\omega +1}$\\
 \hline
 Ball walk    & $n^3$    &$mn$ & $mn^4$\\
  \hline
 HAR  & $n^3$    &$mn$ & $mn^4$\\
  \hline
 Dikin walk with $ n \leq m$  & $mn$    &$mn^{\omega -1}$ & $m^2n^{\omega}$\\
   \hline
 RHCM with $ n \leq m$  & $mn^\frac{2}{3}$    &$mn^{\omega -1}$ & $m^2n^{\omega - \frac{1}{3}}$\\
    \hline
 John's walk with $ n \leq m$  & $n^{7}$    &$mn^4 + n^8$ & $mn^{11} + n^{15}$\\
 \hline
 Vaidya walk with $ n \leq m$   & $m^\frac{1}{2}n^{\frac{3}{2}}$    & $mn^{\omega -1}$ & $m^{1.5}n^{\omega + \frac{1}{2}}$\\
 \hline
 John walk with $ n \leq m$   & $n^\frac{5}{2}log^4(\frac{2m}{n})$    &$mn^{\omega -1}log^2(m)$ & $mn^{\omega + \frac{3}{2}}$\\
 \hline
\end{tabular}
    \begin{tablenotes}
      \footnotesize 
      \item The table contains the upper bounds on the cost per sample (after a warm start) for various random walk algorithms applied to polytopes. In the case of MHAR, $\max\{n,m\} < z$ is assumed. For simplicity, we ignore the logarithmic terms in the cost per sample. We also avoid giving bounds in terms of the condition number of the set for MHAR, Ball walk, and HAR, because this condition number is bounded by $n$ after proper pre-processing. 
    \end{tablenotes}
\end{table*}

In the regime of $n \ll m$, the overall upper bound complexity for the cost per sample is represented by John walk $\ll$ Vaidya walk $\ll$ Dikin walk \cite{walks}. We now show that for $n \ll m$, MHAR has a lower cost per sample than does John walk.

\begin{lemma}\label{lemma mhar john}
For $\max\{n,m\} < z$ and the regime $n \ll m$, MHAR has a lower cost per sample than does the John walk after proper pre-processing, warm start, and ignoring logarithmic and error terms.
\end{lemma}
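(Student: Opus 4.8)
The plan is to mirror the argument used in Lemma \ref{lemma johns walk}, reducing the claim to a direct comparison of the cost-per-sample bounds collected in Table \ref{table_complex}. Since the regime is $n \ll m$ with $\max\{n,m\} < z$, and the comparison is carried out in the full-dimensional setting ($m = m_I$, $m_E = 0$), I would first invoke Theorem \ref{thmmhar1} to fix MHAR's cost per sample. In this regime $n \leq m_I$, so the relevant branch of the theorem gives a cost per sample of $\SO(m_I n^{\omega + 1}) = \SO(m n^{\omega + 1})$.

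Next I would read off the John walk's cost per sample from Table \ref{table_complex}. After suppressing the logarithmic factors (as permitted by the hypothesis and by the soft-O convention established in \S \ref{S:1}), the John walk has cost per sample $\OO(m n^{\omega + \frac{3}{2}})$. The heart of the argument is then a single exponent comparison: since $\omega + 1 < \omega + \frac{3}{2}$ and $n > 1$, we have $n^{\omega + 1} \ll n^{\omega + \frac{3}{2}}$, and multiplying through by the common factor $m$ yields $m n^{\omega + 1} \in \OO(m n^{\omega + \frac{3}{2}})$. Hence MHAR's cost per sample is bounded above by that of the John walk, and strictly lower whenever $n > 1$.

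I do not expect a genuine obstacle here, since once the two bounds are in hand the proof collapses to a one-line comparison of polynomial exponents. The only points requiring a little care are (i) confirming that the regime $n \ll m$ places us squarely in the $n \leq m_I$ branch of Theorem \ref{thmmhar1}, so that the correct MHAR bound ($m n^{\omega + 1}$ rather than $m^{\omega - 2} n^4$) is invoked, and (ii) justifying that the logarithmic and error terms attached to the John walk may be dropped, which follows directly from the lemma's stated hypothesis together with the soft-O notation used throughout. With these settled, the containment $m n^{\omega + 1} \in \OO(m n^{\omega + \frac{3}{2}})$ completes the proof.
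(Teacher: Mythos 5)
Your proposal is correct and matches the paper's own proof: both invoke MHAR's bound $\SO(mn^{\omega+1})$ (the $n \leq m_I$ branch of Theorem \ref{thmmhar1}) and the John walk's bound $\OO(mn^{\omega+\frac{3}{2}})$ from Table \ref{table_complex}, then conclude via the containment $mn^{\omega+1} \in \OO(mn^{\omega+\frac{3}{2}})$. Your additional care in confirming the correct branch of the theorem and justifying the dropped logarithmic terms only makes explicit what the paper leaves implicit.
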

\begin{proof}
From proper pre-processing, $n \ll m$, and $\max\{n,m\} < z$ , MHAR's cost per sample is $\SO(mn^{\omega + 1})$ and that for John walk is $\OO(mn^{\omega + \frac{3}{2}})$. Note that $mn^{\omega + 1} \in \OO(mn^{\omega + \frac{3}{2}})$. Therefore when ignoring the logarithmic and error terms, MHAR has a lower cost per sample.
\end{proof}
\begin{corollary}\label{cor mhar others}
For $\max\{n,m\} < z$ and the regime $n \ll m$, then MHAR $\ll$ John Walk $\ll$ Vaidya walk $\ll$ Dikin walk after proper pre-processing, warm start, and ignoring logarithmic and error terms.
\end{corollary}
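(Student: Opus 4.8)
The plan is to treat this corollary as a transitivity chain over the $\ll$ relation, which the paper defines by $f \ll g$ precisely when $f \in \OO(g)$. Since $\OO$-containment is transitive, it suffices to assemble the individual links and paste them together; there is no genuinely new estimate to prove here, only a matter of aligning regimes and invoking earlier results.

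First I would invoke Lemma \ref{lemma mhar john}, which already establishes that under $\max\{n,m\} < z$, the regime $n \ll m$, proper pre-processing, and a warm start, MHAR $\ll$ John walk. Concretely, this is the containment $mn^{\omega+1} \in \OO(mn^{\omega+\frac{3}{2}})$ once logarithmic and error terms are suppressed, and it supplies the leftmost link of the chain for free. Next I would cite the established result \cite{walks}, quoted in the text immediately preceding the corollary, that in the regime $n \ll m$ the cost-per-sample ordering John walk $\ll$ Vaidya walk $\ll$ Dikin walk holds. This furnishes the remaining two links without further work.

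Finally I would conclude by transitivity of $\OO$: composing MHAR $\ll$ John walk with John walk $\ll$ Vaidya walk $\ll$ Dikin walk yields the full chain MHAR $\ll$ John walk $\ll$ Vaidya walk $\ll$ Dikin walk. The only point requiring care, and the nearest thing to an obstacle in an otherwise immediate argument, is verifying that every link is stated under the same standing hypotheses, namely $\max\{n,m\} < z$, the regime $n \ll m$, proper pre-processing, a warm start, and the suppression of logarithmic and error terms. The MHAR bound and the literature bounds for John walk, Vaidya walk, and Dikin walk are all tabulated in Table \ref{table_complex} under exactly the $n \leq m$ (hence $n \ll m$) assumption, so the regimes are compatible; once this alignment is checked, the transitive chain is legitimate and the corollary follows.
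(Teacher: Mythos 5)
Your proposal is correct and matches the paper's own argument: the paper proves this corollary simply by citing Lemma \ref{lemma mhar john} for the link MHAR $\ll$ John walk, with the remaining chain John walk $\ll$ Vaidya walk $\ll$ Dikin walk taken from \cite{walks} as stated in the text immediately preceding the corollary. Your version merely makes explicit the transitivity of $\OO$-containment and the alignment of hypotheses, which the paper leaves implicit.
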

\begin{proof}
This follows from Lemma \ref{lemma mhar john}.
\end{proof}
 
 We proceed to compare MHAR and RHMC for the regime $n^{1+\frac{1}{3}} \ll m$.
\begin{lemma}\label{lemma mhar rhmc}
For  $\max\{n,m\} < z$ and $n^{1+\frac{1}{3}} \ll m$, then MHAR $\ll$ RHMC after proper pre-processing, warm start and ignoring logarithmic and error terms.\end{lemma}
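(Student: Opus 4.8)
The plan is to settle the claim by a direct comparison of the two cost-per-sample bounds already in hand, reducing the asserted dominance MHAR $\ll$ RHMC to one elementary inequality between $m$ and $n$. First I would fix the regime: the hypothesis $n^{1+\frac{1}{3}} \ll m$ in particular forces $n \ll m$, so we are in the branch $n \le m_I$ of Theorem \ref{thmmhar1} (full-dimensional, $m = m_I$), which assigns MHAR the cost per sample $\SO(mn^{\omega+1})$. From the RHMC row of Table \ref{table_complex} the corresponding bound is $m^2 n^{\omega - \frac{1}{3}}$, again under the assumption $n \le m$.

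Next I would verify the containment $mn^{\omega+1} \in \OO(m^2 n^{\omega - \frac{1}{3}})$. Cancelling the common factor $m n^{\omega - \frac{1}{3}}$ from both sides turns this into $n^{1 + \frac{1}{3}} \in \OO(m)$, which is exactly the hypothesis $n^{1+\frac{1}{3}} \ll m$. Thus MHAR's bound is asymptotically no larger than RHMC's, giving MHAR $\ll$ RHMC in this regime once the logarithmic and error terms are suppressed, as assumed throughout.

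There is no real obstacle here; the only care needed is bookkeeping, namely confirming that the correct case of Theorem \ref{thmmhar1} is selected and that $\max\{n,m\} < z$ guarantees the padded bound applies. After that, the hypothesis supplies the needed exponent relation $\omega + 1 - (\omega - \frac{1}{3}) = 1 + \frac{1}{3}$ directly, with no further estimation required.
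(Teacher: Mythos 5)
Your proposal is correct and follows essentially the same route as the paper's own proof: both invoke the $n \le m_I$ case of Theorem \ref{thmmhar1} to get $\SO(mn^{\omega+1})$ for MHAR, quote $\OO(m^2 n^{\omega-\frac{1}{3}})$ for RHMC, and observe that the containment $mn^{\omega+1} \in \OO(m^2 n^{\omega-\frac{1}{3}})$ is equivalent to the hypothesis $n^{1+\frac{1}{3}} \ll m$. Your explicit cancellation of the common factor $mn^{\omega-\frac{1}{3}}$ just makes transparent the step the paper states without detail.
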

\begin{proof}
From proper pre-processing, $n \ll m$, and $n,m < z$, MHAR's cost per sample is $\SO(mn^{\omega + 1})$, and RHMC's is $\OO(m^2n^{\omega - \frac{1}{3}})$. Note that $mn^{\omega + 1} \in \OO(m^2n^{\omega - \frac{1}{3}})$, because $n^{1+\frac{1}{3}} \ll m$. Therefore, when ignoring the logarithmic and error terms, MHAR has a lower cost per sample.
\end{proof}

From corollaries \ref{cor eff ball} and \ref{lemma johns walk}, MHAR $\ll$ Ball walk and MHAR $\ll$ HAR, regardless of the regime between $m$ and $n$. And MHAR $\ll$ John's Walk for the regime $n\leq m$. From corollary \ref{cor mhar others}, MHAR $\ll$ John Walk $\ll$ Vaidya walk $\ll$ Dikin walk if $n < m$. Finally, by Lemma \ref{lemma mhar rhmc}, if $n^{1+\frac{1}{3}} \ll m$, then MHAR $\ll$ RHMC. 

Then, if $n^{1+\frac{1}{3}} \ll m$ we have an analytic guarantee that MHAR has a lower cost per sample than all of the other algorithms in Table \ref{table_complex}. Moreover, empirical tests show that MHAR is faster than all of the other algorithms in Table \ref{table_complex} for regimes other than  $n^{1+\frac{1}{3}} \ll m$.
 

\section{MHAR Empirical Test}\label{S:6}
This section details a series of experiments to compare MHAR against the \textit{hitandrun} library used by \cite{har_tommi}. We compare the running times in simplexes and hypercubes of different dimensions and for various values of the padding hyper-parameter $z$. We also test the robustness of MHAR by conducting empirical analyses similar to those in \cite{har_tommi}. 
MHAR experiments were run in a Colab Notebook equipped with an Nvidia P100 GPU, and a processor Intel\textsuperscript{\textregistered} Xeon\textsuperscript{\textregistered} CPU running at 2.00 GHz, and 14 GB of RAM.
Due to its apparent incompatibility with the Colab Notebook, the \textit{hitandrun} experiments were run in a $<$device$>$ equipped with an Intel\textsuperscript{\textregistered} Core\texttrademark\ i7-7700HQ CPU running at 2.80 GHz and 32 GBs of RAM. All experiments used 64 bits of precision.

We formally define the $n\mbox{-simplex}$ and the $n\mbox{-hypercube}$ as
\begin{align}
n\mbox{-simplex} &= \{ x \in \R^n \| \sum x_i = 1, x \geq 0 \},\\
n\mbox{-hypercube} &= \{ x \in \R^n \| x \in [-1,1]^n\}.
\end{align}

\subsection{The Code}
The MHAR code was developed using python, and the Pytorch library was chosen because of its flexibility, power, and popularity \cite{paszke2019pytorch}. Pytorch also works in a CPU without need of a GPU, although the latter is more suitable for large samples in high dimensions.  
The MHAR experiments were performed without observing any numerical instabilities, and the maximum error found for the inversion matrix was on the order $1e$-$16$, which is robust enough for most applications. Operations such as matrix inversion, random number generation, matrix-to-matrix multiplication, and point-wise operations were carried out in the GPU. The only operations that needed to be carried out in the CPU were reading the constraints and saving the samples to disk. 

For the rest of this section, the acronyms MHAR and HAR refer to the actual implementations and not the abstract algorithms. The code is available in \url{https://github.com/uumami/mhar_pytorch}.

\subsection{The padding}
The padding hyper-parameter $z$ determines the number of simultaneous walks the algorithm performs. We generated 10 MHAR runs for each dimension (5, 25, 50, 100, 500, 1000) and each padding value ($z$) on simplexes and hypercubes. At each run we calculated the average samples per second as follows:
\[ 
Avg. \ Samples  \ per \  Second = \frac{Total \ Samples}{Time} = \frac{z \times \varphi \times T}{Time}.
\]
For example, $z$ might equal 100, the thinning parameter $\varphi$ might equal 30,000, and the number of iterations $T$ might equal 1, which would yield $3,000,000$ samples. If the experiment took 1,000 seconds, the average samples per second would be $3,000$.

Figures \ref{padding_times_S} and \ref{padding_times_H} show box-plots for the experiments in dimensions 5 and 1000 for the simplex and the hypercube, respectively. The box-plots for the the simplex and the hypercube in dimensions 25, 50, 100 and 500 can be found in Figures \ref{padding_times_SA} and \ref{padding_times_HA} in \ref{BP1}.
\begin{figure}[h!]
\centering
\subfigure[Unit simplex in dimension 5.]{
\label{subfig:pt_S_5}
\includegraphics[scale=.33,viewport=20 0 650 500,clip]{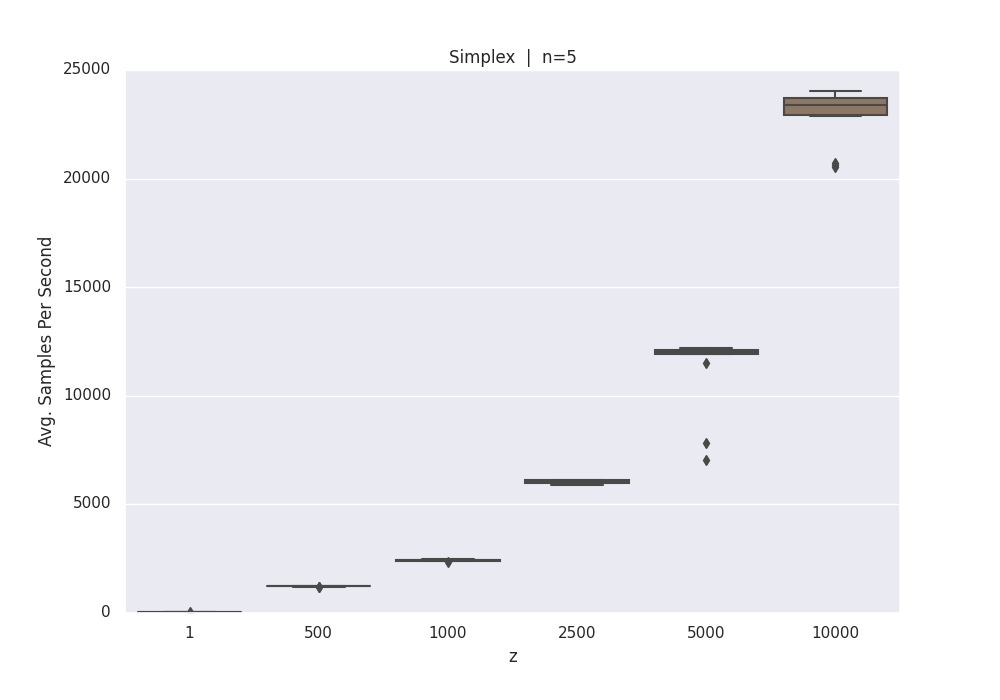}
}
\subfigure[Unit Simplex in dimension 1000.]{
\label{subfig:pt_S_1000}
\includegraphics[scale=.33,viewport=20 0 650 500,clip]{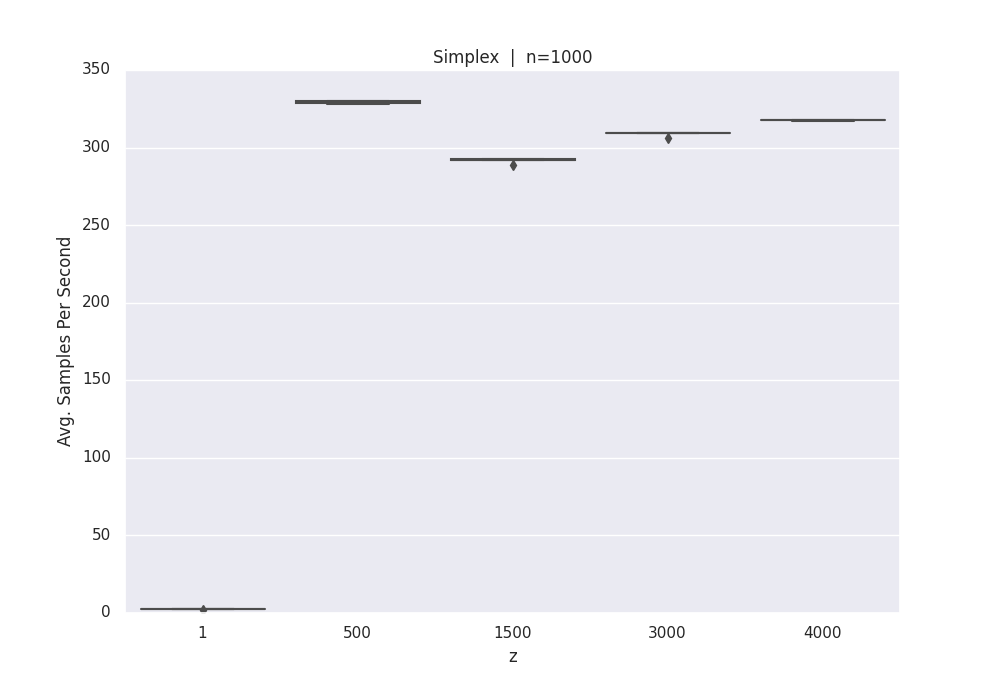}
}
\caption{Box-plots for simplexes comparing padding behavior . In the y-axis the average samples per second are in thousands for different values of the padding parameter $z$.}
\label{padding_times_S}
\end{figure} 

The box in the box-plots show the 25\%, 50\%, and 75\% percentiles. The diamonds mark outliers, and the upper and lower limits mark the maximum and minimum values without considering outliers. For small values of $z$, larger padding yielded more average samples per second. However, for some dimensions in the simplex and the hypercube, there was a value of $z$ for which efficiency was lower. We conjecture that at some point large values of $z$ could cause memory contention in the GPU.
\begin{figure}[h!]
\centering
\subfigure[Hypercube in dimension 5.]{
\label{subfig:pt_H_5}
\includegraphics[scale=.33,viewport=20 0 650 500,clip]{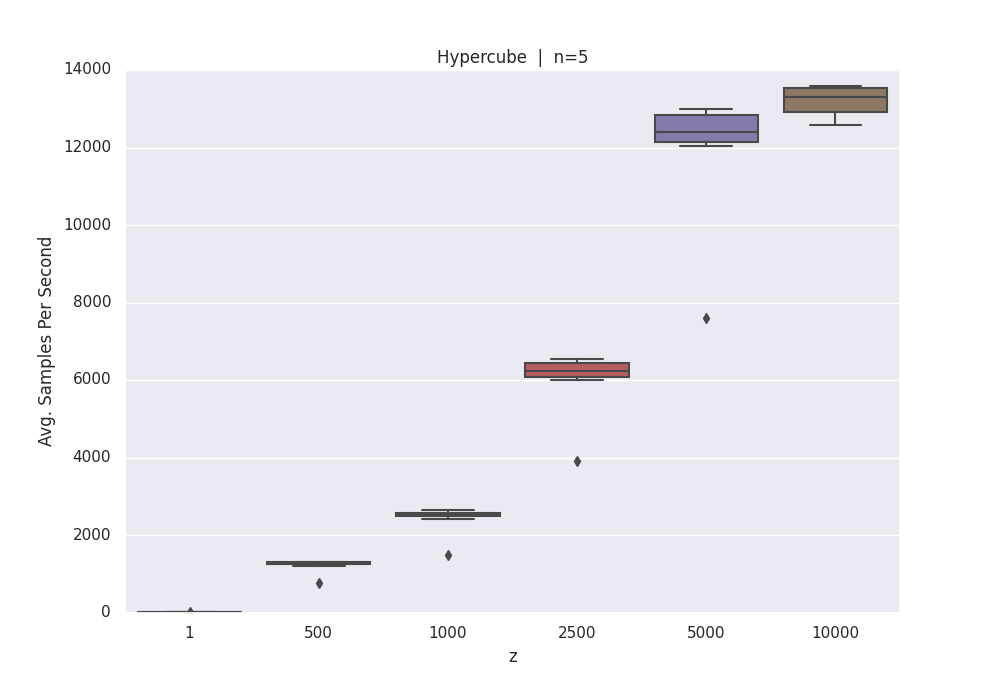}
}
\subfigure[Hypercube in dimension 1000.]{
\label{subfig:pt_H_1000}
\includegraphics[scale=.33,viewport=20 0 650 500,clip]{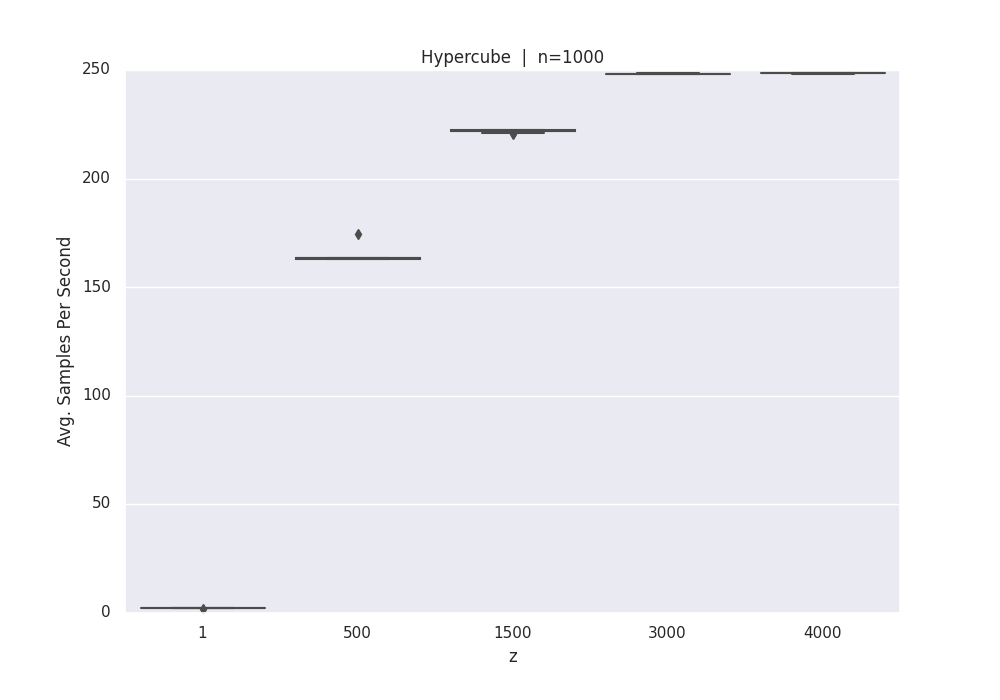}
}
\caption{Box-plots for hypercubes comparing padding behavior. In the y-axis the average samples per second are in thousands for different values of the padding parameter $z$.}
\label{padding_times_H}
\end{figure} 

\subsection{Performance Test MHAR vs HAR}

To compare MHAR and HAR we generated 10 simulations for different dimensions, and two types of polytopes (simplex and hypercubes). For the simplex we tested dimensions: 5, 25, 50, 100, and 250, and for the hypercube we tested dimensions: 5, 25, 50, 100, 500, and 1000. The \textit{hitandrun} routines for sampling the simplex exhibited an extreme drop in performance at dimensions higher than $100$ and memory contention at dimensions higher than $300$. 

For \textit{hitandrun}, the total number of samples equals number of iterations times the thinning parameter. Because \textit{hitandrun} does not make use of the GPU, the times are dependent on the CPU. Before running a given combination of convex body and dimension in MHAR, we selected the padding hyper-parameter $z^*$ that had the highest average sampled points per second according to our padding experiments. So the $z^*$ can differ by dimension. We used $\varphi=30,000$ and $T=1$. Table \ref{mhar_speedup} summarizes the results.
 \begin{table*}[h!]
\caption{Performance of MHAR versus HAR for the optimal value of $z^*$}
\label{mhar_speedup}
\vspace{.2cm}
\centering 
\resizebox{.9\hsize}{!}{
\begin{tabular}{lrrrrrrr}
\hline
& & & & & \multicolumn{2}{c}{Avg. Samples Per Second} \\
\cline{5-6}
\cline{7-8}
Figure &     n &      $z$ &  Performance ratio $\quad \quad $    &     MHAR mean &   HAR mean &     MHAR Std. Dev. &    HAR Std. Dev. \\
       &       &          & (MHAR mean / HAR mean)                 &               &            &                    &    \\ 
    \hline
 Hypercube &     5 &  10,000 &        14.18 &             13,206,089.93 & 931,368.92 &   376,068.96 &  57,727.69 \\
 Hypercube &    25 &   5,000 &        29.05 &             10,839,474.35 & 373,127.77 & 1,236,619.81 &  77,786.96 \\
 Hypercube &    50 &   2,500 &        21.85 &              5,151,516.81 & 235,742.22 &   612,241.73 &  20,636.30 \\
 Hypercube &   100 &   4,000 &       116.77 &              4,363,525.70 &  37,367.93 &    10,619.65 &   1,486.54 \\
 Hypercube &   500 &   4,000 &        95.21 &                621,554.24 &   6,528.56 &       782.70 &     157.76 \\
 Hypercube &  1,000 &   4,000 &       248.32 &                248,513.69 &   1,000.79 &       182.97 &      18.15 \\
   Simplex &     5 &  10,000 &        23.14 &             22,878,783.33 & 988,580.92 & 1,258,481.83 & 126,254.73 \\
   Simplex &    25 &  10,000 &     1,343.58 &             24,338,761.06 &  18,114.90 &   168,300.75 &     409.27 \\
   Simplex &    50 &  10,000 &    12,630.89 &             13,425,900.57 &   1,062.94 &    16,403.51 &      17.33 \\
   Simplex &   100 &   3,000 &   128,348.67 &              7,255,837.08 &      56.53 &   135,616.62 &       0.88 \\
   Simplex &   250 &   4,000 & 2,551,224.17 &              2,656,449.22 &       1.04 &     4,440.59 &       0.00 \\
\hline
\end{tabular}
}
\end{table*}
 
Table \ref{mhar_speedup} shows substantial performance gains for MHAR. For the simplex, the gains were greater at higher dimensions. The performance ratio (average samples per second for MHAR divided by that for HAR) was $23$ for $n=5$ and $2.5$ million for $n=250$. For the hypercube, performance gain for MHAR was also greater at higher dimensions. Nevertheless, the performance ratio was $14$ for $n=5$ and $248$ for $n=1,000$.
 
 In order to test the limits of our implementation, we conducted an additional set of experiments for lower and higher dimensions and different padding parameters. We present these results in \ref{App1}.
 
\subsection{Independence Test}

To asses the convergence of MHAR to a uniform distribution, we conducted Friedman-Rafsky two-sample Minimum Spanning Tree (MST) test \cite{mst}, as was done in \cite{har_tommi}. The test compares an obtained sample $S$ (MHAR) with a sample $U$ from the target distribution. The test defines an MST for $S$ and $U$ by counting the number of within- and across-sample edges to assess if both samples come from the same distribution. The statistic from the tests yields a \textit{z-value} for the null hypothesis: ``Both samples are drawn from the same distribution.'' Authors in \cite{har_tommi} establish a threshold of $-1.64 \leq$ \textit{z-value} to accept the null hypothesis.

A uniform sample $U$ can quickly be drawn from the hypercube or the simplex \cite{simplex_sample} using known statistical methods. 
We generated 10 simulations in simplexes and hypercubes in dimensions: $5$, $15$, $25$, and $50$, for a total of 80 simulations. We used a single padding parameter ($z$) of $1000$; and a "burning rate" ($\varphi$) of $(n-1)^3$ for the simplex, and $n^3$ for the hypercube. Each simulation draw a total of $5000$ samples that were compared to an independently generated sample $U$ each time. 

Figure \ref{z_box} shows the results from the experiments. The red dashed line represents the threshold of $-1.64 \leq$ \textit{z-value}. All simulations where above the expected threshold with the exception of one single experiment for the simplex in dimension 25. This experiments suggests that MHAR mixes fast from any starting point, supporting the uniform sample hypothesis.
\begin{figure}[h]
\centering
\subfigure[Simplex.]{
\label{subfig:pt_Sim}
\includegraphics[scale=.33,viewport=20 0 650 500,clip]{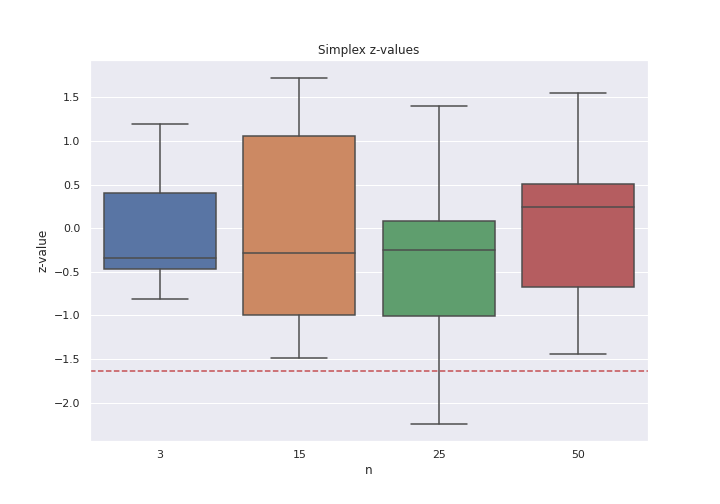}
}
\subfigure[Hypercube.]{
\label{subfig:pt_Hc}
\includegraphics[scale=.33,viewport=20 0 650 500,clip]{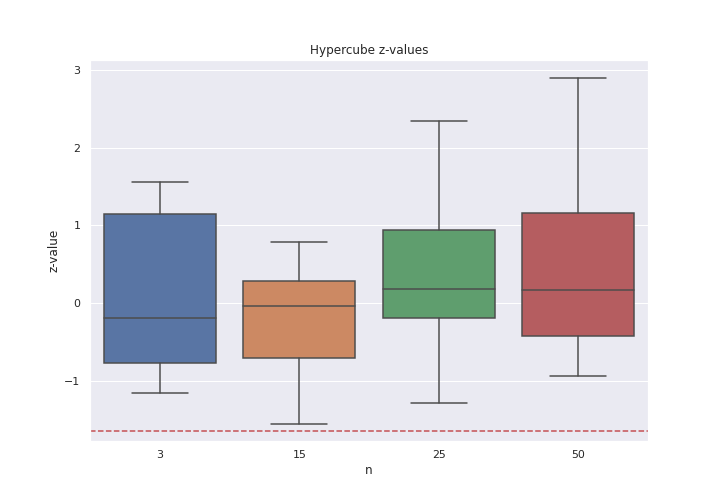}
}
\caption{Friedman-Rafsky two-sample MST tests.}
\label{z_box}
\end{figure}
 
\section{Conclusions}\label{S:7}

MHAR showed sustainable performance improvements over HAR while having a robust uniform sampling. We hope that this technical advances move the scientific community towards simulation approaches to complement the already established analytical solutions. Our contribution was in creating the MHAR, analyzing its asymptotic behavior in terms of complexity and convergence, alongside a robust and easy to use implementation ready for deployment, including the cloud. Our implementation is substantially faster than existing libraries, especially for bigger dimensions. Additionally, we showed the versatility that Deep Learning frameworks, like Pytorch, can bring to support research.

We would like to emphasize the relevance of this work as a cornerstone to exploratory-optimization algorithms. The speedups we present in high dimensions makes it possible for many new practical applications to become a normal trend, expanding the range of solutions that engineering can provide. In particular, our previous work in Decision Analysis, Optimization, Game Theory, and Ambiguity Optimization will be significantly improved with this tool, and we think that many practitioners and researchers will be benefit as well.   

Our implementation could be extended to multiple GPUs, possibly distributed. This will allow us to sample even larger polytopes using cloud architectures. 
Given the speed up results, a bounding approach for more general convex figures alongside accept-and-reject methods is worth exploring, especially for volume calculations.

\section*{Acknowledgments}
This work was supported by the National Council of Science and Technology of Mexico (CONACYT) and the National System of Researchers (SNI) under Luis V. Montiel, Grant No. 259968. In addition, we also acknowledge Dr. Fernando Esponda, Dr. Jose Octavio Gutierrez, and Dr. Rodolfo Conde for their support and insight in the development of this work.

\setstretch{.9} 

\bibliographystyle{elsarticle-num-names}
\bibliography{sample.bib}

\newpage
\appendix 
\setstretch{1.5} 

\section{Additional Optimal Padding Experiments}\label{BP1}

Here we present the results for different padding parameters using 10 MHAR runs for each dimension (25, 50, 100, 500) on simplexes and hypercubes. Figure \ref{padding_times_SA} shows the box-plots for simplexes while Figure \ref{padding_times_HA} shows the box-plots for hypercubes. 
\begin{figure}[h!]
\centering
\subfigure[Unit simplex in dimension 25.]{\label{subfig:pt_S_25}\includegraphics[scale=.34,viewport=20 0 650 500,clip]{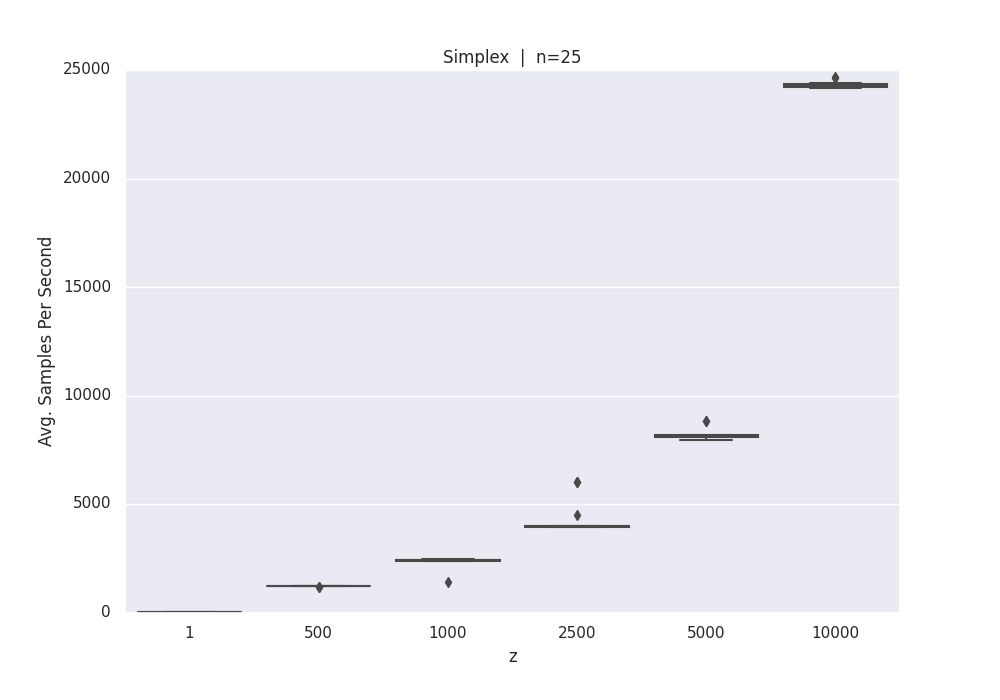}}
\subfigure[Unit simplex in dimension 50.]{\label{subfig:pt_S_50}\includegraphics[scale=.34,viewport=20 0 650 500,clip]{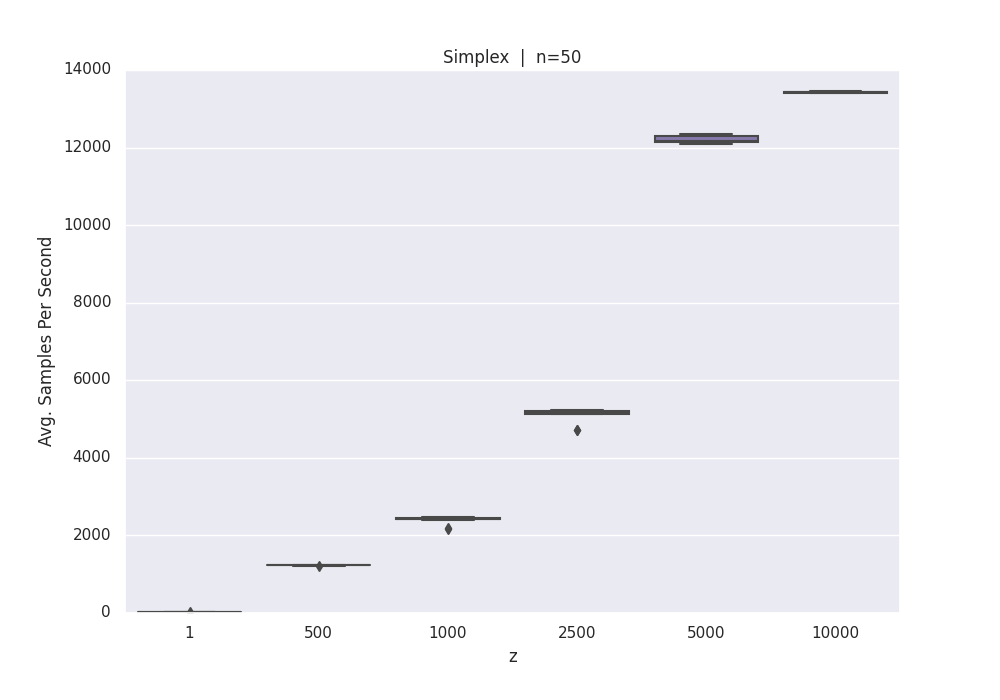}}
\subfigure[Unit simplex in dimension 100.]{\label{subfig:pt_S_100}\includegraphics[scale=.34,viewport=20 0 650 500,clip]{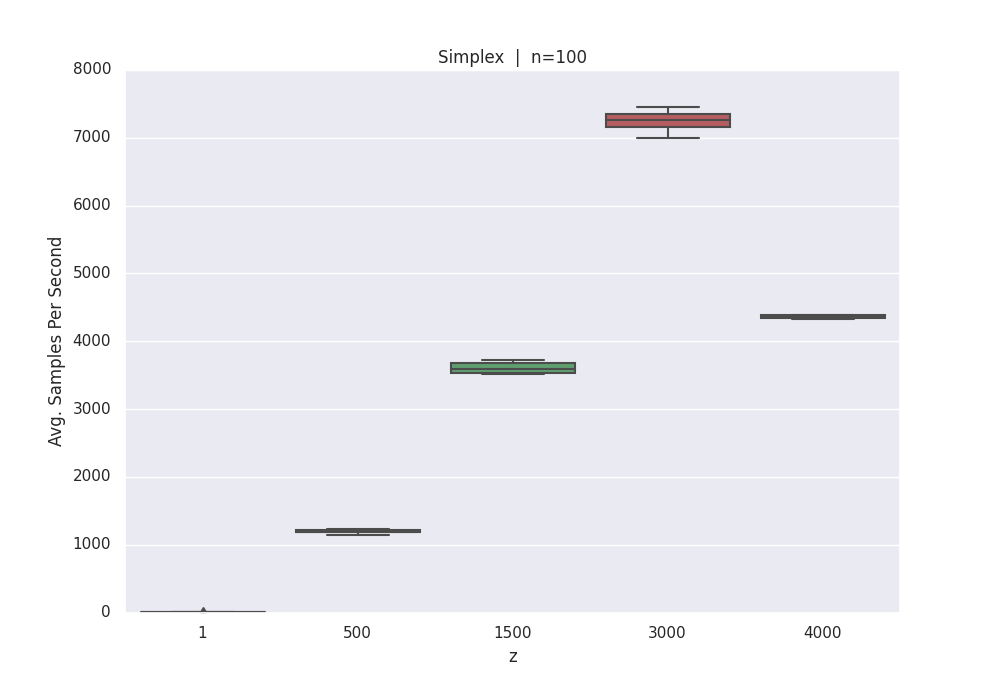}}
\subfigure[Unit simplex in dimension 500.]{\label{subfig:pt_S_500}\includegraphics[scale=.34,viewport=20 0 650 500,clip]{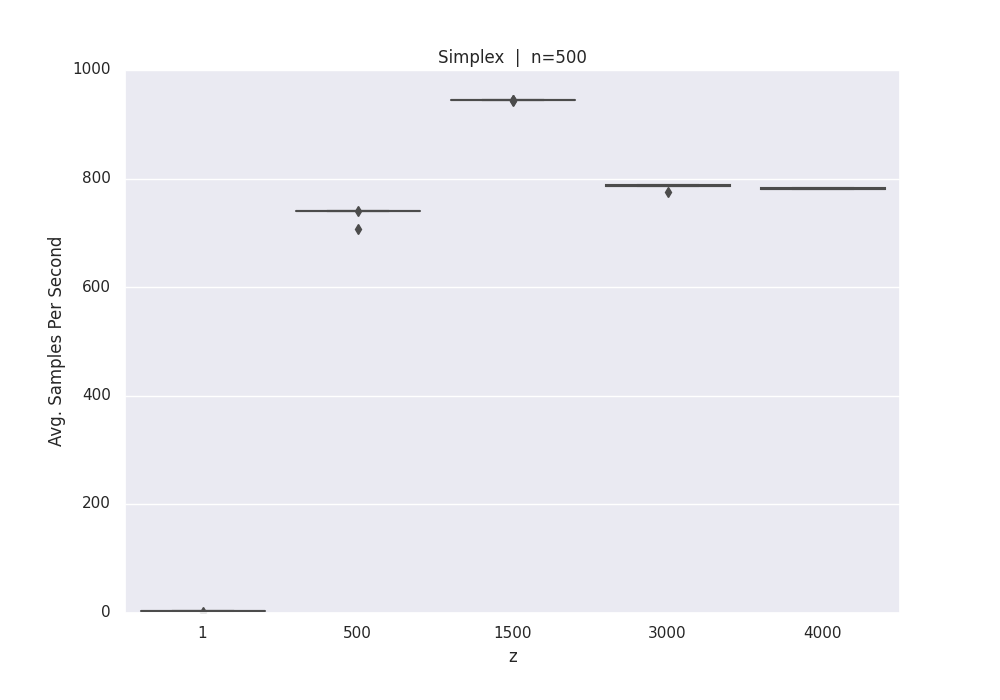}}
\caption{Box-plots for simplexes comparing padding behavior. In the y-axis the average samples per second are in thousands for different values of the padding parameter $z$.}
\label{padding_times_SA}
\end{figure} 

The box in the boxplots show the 25\%, 50\%, and 75\% percentiles. The diamonds mark outliers, and the upper and lower limits mark the maximum and minimum values without considering outliers.

\begin{figure}[h!]
\centering
\subfigure[Hypercube in dimension 25.]{\label{subfig:pt_H_25}\includegraphics[scale=.34,viewport=20 0 650 500,clip]{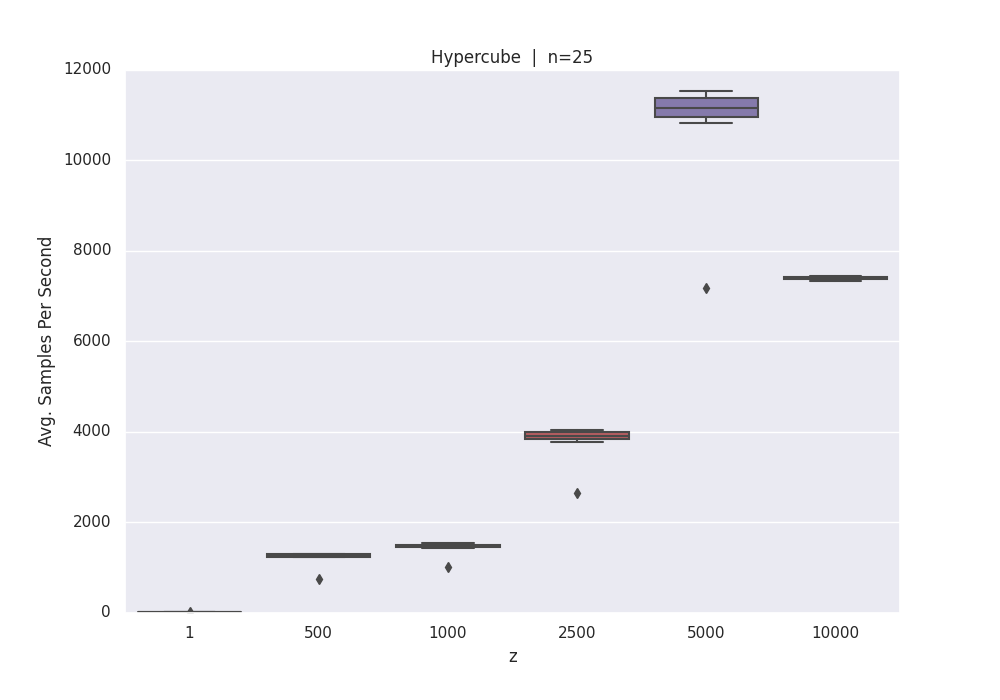}}
\subfigure[Hypercube in dimension 50.]{\label{subfig:pt_H_50}\includegraphics[scale=.34,viewport=20 0 650 500,clip]{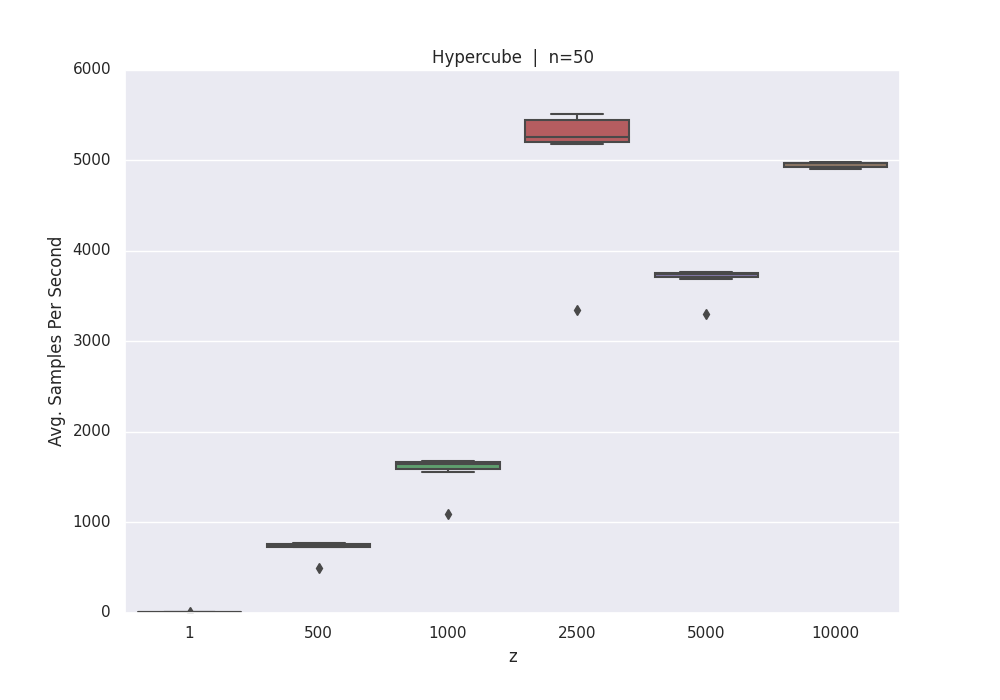}}
\subfigure[Hypercube in dimension 100.]{\label{subfig:pt_H_100}\includegraphics[scale=.34,viewport=20 0 650 500,clip]{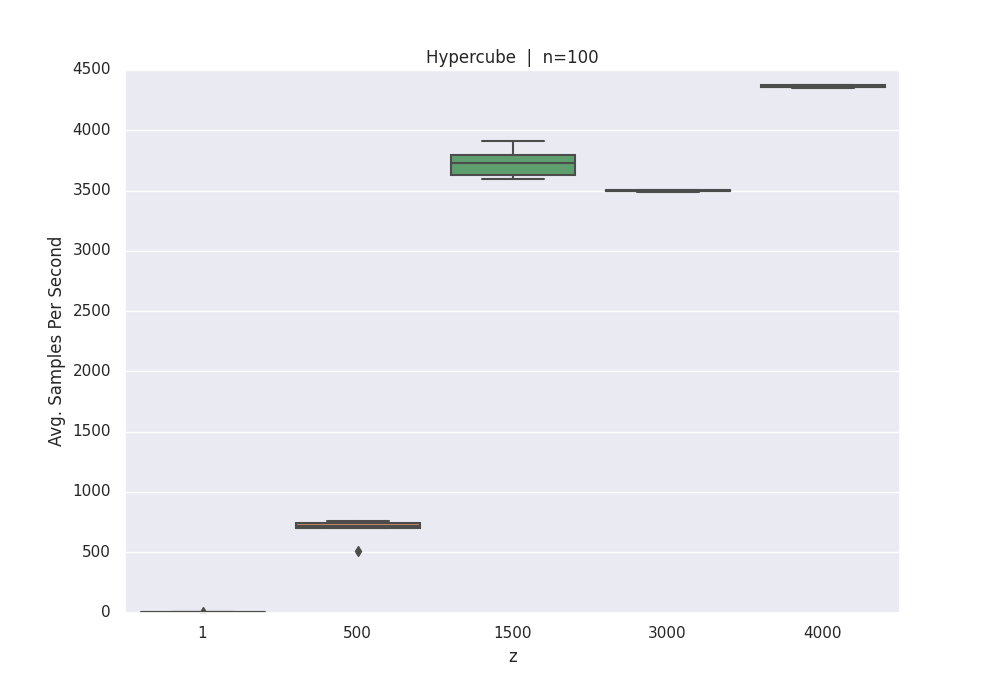}}
\subfigure[Hypercube in dimension 500.]{\label{subfig:pt_H_500}\includegraphics[scale=.34,viewport=20 0 650 500,clip]{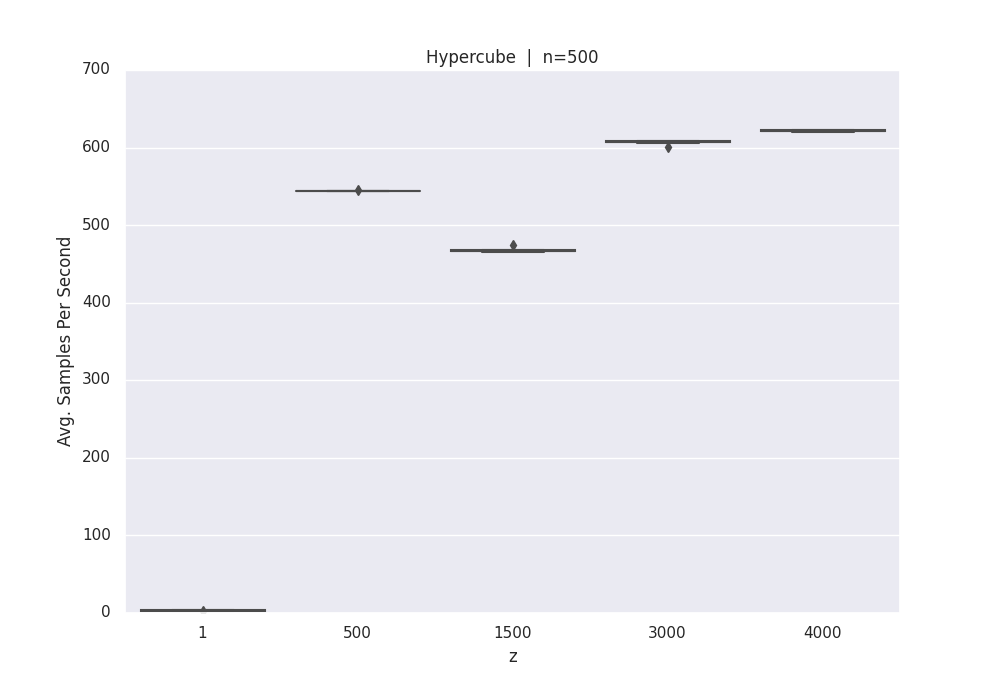}}
\caption{Box-plots for hypercube comparing padding behavior. In the y-axis the average samples per second are in thousands for different values of the padding parameter $z$.}
\label{padding_times_HA}
\end{figure} 

\newpage
\section{Additional Performance Experiments}\label{App1}

Here we present some additional experiments of the fitness of the MHAR. Table \ref{mhar_table_times} reports the running times and the average sampled points per second for the best values of $z$ for each combination of figure and dimension. For each combination, we conducted the experiment 10 times. 
Table \ref{mhar_table_times} shows that average samples per second is lower for higher dimensions, due to the curse of dimensionality. However, the performance of MHAR is outstanding.

\begin{table*}[t]
\caption{Samples Per Second of the MHAR.}
\vspace{.2cm}
\label{mhar_table_times}
\centering 
\resizebox{.9\hsize}{!}{
\begin{tabular}{lrrrrrrr}
\hline
& & & & \multicolumn{2}{c}{Avg. Samples Per Second} & \multicolumn{2}{c}{Running Time (seconds)} \\
\cline{5-6}
\cline{7-8}
    Figure &     n &   $z$ & Total Samples &                      Mean &          Std. Dev. &         Mean &  Std. Dev. \\
    \hline
 Hypercube &     3 &  10,000 &      300,000,000 &             25,357,073.87 &   675,444.40 &        11.84 & 0.32 \\
 Hypercube &     5 &  10,000 &      300,000,000 &             13,206,089.93 &   376,068.96 &        22.73 & 0.66 \\
 Hypercube &    15 &  10,000 &      300,000,000 &             25,344,794.68 &   655,021.48 &        11.84 & 0.31 \\
 Hypercube &    25 &   5,000 &      150,000,000 &             10,839,474.35 & 1,236,619.81 &        14.07 & 2.28 \\
 Hypercube &    50 &   2,500 &       75,000,000 &              5,151,516.81 &   612,241.73 &        14.83 & 2.54 \\
 Hypercube &   100 &   4,000 &      120,000,000 &              4,363,525.70 &    10,619.65 &        27.50 & 0.07 \\
 Hypercube &   250 &   3,000 &       90,000,000 &              1,219,419.53 &     8,630.27 &        73.81 & 0.53 \\
 Hypercube &   500 &   4,000 &      120,000,000 &                621,554.24 &       782.70 &       193.06 & 0.24 \\
 Hypercube &  1,000 &   4,000 &      120,000,000 &                248,513.69 &       182.97 &       482.87 & 0.36 \\
 Hypercube &  2,500 &   1,500 &       15,000,000 &                 50,808.74 &        15.02 &       295.22 & 0.09 \\
 Hypercube &  5,000 &   1,000 &       10,000,000 &                 16,161.69 &         5.92 &       618.75 & 0.23 \\
   Simplex &     3 &  10,000 &      300,000,000 &             19,795,014.21 & 2,628,558.29 &        15.38 & 1.81 \\
   Simplex &     5 &  10,000 &      300,000,000 &             22,878,783.33 & 1,258,481.83 &        13.15 & 0.77 \\
   Simplex &    15 &  10,000 &      300,000,000 &             24,269,548.32 &   302,854.48 &        12.36 & 0.16 \\
   Simplex &    25 &  10,000 &      300,000,000 &             24,338,761.06 &   168,300.75 &        12.33 & 0.08 \\
   Simplex &    50 &  10,000 &      300,000,000 &             13,425,900.57 &    16,403.51 &        22.34 & 0.03 \\
   Simplex &   100 &   3,000 &       90,000,000 &              7,255,837.08 &   135,616.62 &        12.41 & 0.23 \\
   Simplex &   250 &   4,000 &      120,000,000 &              2,656,449.22 &     4,440.59 &        45.17 & 0.08 \\
   Simplex &   500 &   1,500 &       45,000,000 &                944,784.52 &       583.24 &        47.63 & 0.03 \\
   Simplex &  1,000 &    500 &       15,000,000 &                329,315.49 &       556.62 &        45.55 & 0.08 \\
   Simplex &  2,500 &    500 &        5,000,000 &                 77,312.01 &     3,045.62 &        64.78 & 2.86 \\
   Simplex &  5,000 &   1,000 &       10,000,000 &                 22,437.63 &        62.27 &       445.68 & 1.25 \\
\hline
\end{tabular}
}
     \begin{tablenotes}
      \footnotesize 
      \item Note: The table contains the performance statistics obtained during the MHAR experiments for the best possible value of $z$ we could find.
    \end{tablenotes}
\end{table*}

\end{document}